\documentclass[11pt,a4paper,showkeys]{article}

\usepackage{amsmath}
\usepackage{amssymb}
\usepackage{amsthm}
\usepackage{latexsym}
\usepackage[dvipdfmx]{graphicx}
\usepackage[dvips]{color}

\theoremstyle{plain}
\newtheorem{theorem}{Theorem}[section]
\newtheorem{corollary}[theorem]{Corollary}

\newtheorem{lemma}[theorem]{Lemma}

\theoremstyle{definition}

\newcommand{\ZZ}{{\mathbb{Z}}}
\newcommand{\RR}{{\mathbb{R}}}
\newcommand{\Ueven}{U_{\mathrm{even}}}
\newcommand{\Uodd}{U_{\mathrm{odd}}}
\newcommand{\Veven}{V_{\mathrm{even}}}
\newcommand{\Vodd}{V_{\mathrm{odd}}}

\usepackage{fullpage}

\usepackage{newtxtext,newtxmath}

\title{On the Equivalence of the Graph-Structural and Optimization-Based Characterizations of Popular Matchings}

\author{
Yuga Kanaya\thanks{Department of Systems Engineering, Graduate School of Science and Engineering, Hosei University, Tokyo 184-8584, Japan.}
\and 
Kenjiro Takazawa\thanks{Department of Industrial and Systems Engineering, Faculty of Science and Engineering, 
Hosei University, Tokyo 184-8584, Japan.  
\texttt{takazawa@hosei.ac.jp}.  
Partially supported by 
JSPS KAKENHI Grant Numbers 
JP24K02901, JP24K14828, 
Japan.
}}

\date{August 2025}

\begin{document}

\maketitle

\begin{abstract}
Popular matchings provide a model of matching under preferences in which a solution corresponds to a Condorcet winner in voting systems. In a bipartite graph in which the vertices have preferences over their neighbours, a matching is defined to be popular if it does not lose in a majority vote against any matching. In this paper, we study the following three primary problems: only the vertices on one side have preferences; a generalization of this problem allowing ties in the preferences; and the vertices on both sides have preferences. A principal issue in the algorithmic aspects of popular matchings is how to determine the popularity of a matching, because it requires exponential time if the definition is simply applied. In the literature, we have the following two types of characterizations: a graph-structural characterization; and an optimization-based characterization described by maximum-weight matchings. The graph-structural characterizations are specifically designed for each problem and provide a combinatorial structure of the popular matchings. The optimization-based characterizations work in the same manner for all problems, while they do not reveal the structure of the popular matchings. A main contribution of this paper is to provide a direct connection of the above two types of characterizations for all of the three problems. Specifically, we prove that each characterization can be derived from the other, without relying on the fact that they characterize popular matchings. Our proofs offer a comprehensive understanding of the equivalence of the two types of characterizations, and suggest a new interpretation of the graph-structural characterization in terms of the dual optimal solution for the maximum-weight matching problem. Further, this interpretation suggests a possibility of deriving a graph-structural characterization for problems of popular matchings for which only the optimization-based characterization is known. 
\end{abstract}

\section{Introduction}

\label{SECintro}

\emph{Popular matchings} \cite{Gar75} provide a model of matching under preferences \cite{Man13} which has relevance to computational social choice, 
and  has similarity to both stable matchings and votings. 
One typical problem of popular matchings is described as follows. 

Let $G=(U,V;E)$ denote a bipartite graph with color classes $U,V$ and edge set $E$. 
An edge set $M\subseteq E$ is called a \emph{matching} if 
each vertex $u\in U\cup V$ is incident to at most one edge in $M$. 
For a vertex $u$ and a matching $M$, 
let $M(u)$ denote the vertex which is matched with $u$ by $M$. 
Note that $M(u)$ may be empty. 

For each vertex $u\in U\cup V$, 
let $\Gamma(u)\subseteq U\cup V$ denote the set of vertices adjacent to $u$, 
i.e.,\ 
$\Gamma(u) =\{ v \in U\cup V \colon (u,v)\in E\}$. 
Each vertex $u$ has \emph{preferences} over the vertices in $\Gamma(u)$, 
which is represented by a totally ordered set $(\Gamma(u), \succ_u)$. 
Namely, for $v_1,v_2\in \Gamma(u)$, 
$v_1 \succ_u v_2$ denotes that $u$ prefers $v_1$ to $v_2$. 
We assume that every vertex $u$ prefers every vertex in $\Gamma(u)$ to the empty set. 

For two matchings $M,N\subseteq E$, 
Let $\Delta(M,N)$ denote the number of vertices which prefers $M$ to $N$. 
Formally, 
\begin{align*}
\Delta (M,N) = |\{u\in U\cup V \colon M(u) \succ_u N(u)\}| - |\{u\in U\cup V \colon N(u) \succ_u M(u)\}|. 
\end{align*}
A matching $M$ is \emph{popular} if 
$\Delta(M,N) - \Delta(N,M) \ge 0$ for each matching $N$ in $G$. 

The above definition immediately leads to one reasonable interpretation of popular matchings. 
Namely, a popular matching is a Condorcet winner of the voting system 
in which the vertices in $U\cup V$ elect a matching in $G$. 
Popular matchings also have a close connection to stable matchings. 
G\"ardenfors \cite{Gar75} proved that a stable matching is a popular matching. 
More specifically, 
Huang and Kavitha \cite{HK13} proved that a stable matching is a popular matching of minimum size. 
Namely, 
another reasonable interpretation of popular matchings is that 
they are matchings reflecting the preferences of the agents, 
while having the possibility of including more agents than stable matchings.

Similarly to stable matchings, 
popular matching problems are classified into various problems. 
The problem described above is referred to as the \emph{stable matching problem with incomplete lists}, 
and \emph{SMI problem} for short. 
Other primary problems are the \emph{house allocation problem} (\emph{HA problem}), 
in which only the vertices in $U$ have preferences, 
and 
the \emph{house allocation problem with ties} (\emph{HAT problem}), 
which is a generalization of the HA problem allowing ties in the preferences. 
Precise definitions of those problems will be provided in Section \ref{SECpre}. 

A primary issue in the algorithmic aspects of popular matchings is 
how to determine whether a matching $M$ is popular or not. 
This is because, 
unlike stable matchings, 
the above definition of popular matchings does not directly imply an efficient method for determining popularity. 
Namely, 
if the above definition is simply used, 
the computation of $\Delta(M,N)$ for every other matching $N$ is required,  
which may take exponential time. 
Thus, 
providing a characterization of popular matchings which can be efficiently checked is essential in the algorithmic study of popular matchings. 

In this paper, 
we focus on two major types of those characterizations in the literature. 
A seminal work by Abraham, Irving, Kavitha, and Mehlhorn \cite{AIKM07} was the first to provide such characterizations 
for the HA and HAT problems, 
which lead to a polynomial-time algorithm for finding a popular matching. 
These characterizations are based on the structure of the graph and the preferences, 
and imply the combinatorial structure of all popular matchings. 
We refer to this type of characterizations as \emph{graph-structural characterizations}. 
A graph-structural characterization for the SMI problem is given by 
Huang and Kavitha \cite{HK13}, 
which also leads to a polynomial-time algorithm for finding a popular matching. 
We remark that 
the graph-structural characterizations are specifically designed for each problem.  

In contrast, 
the other type of characterization works for all of the above problems in the same manner. 
It was first given by Bir\'o, Irving, and Manlove \cite{BIM10} for the SMI problem, 
and  
is immediately applied to the HA and HAT problems. 
This characterization is based on maximum-weight matchings with respect to edge weights defined from the preferences, 
which leads to a polynomial-time algorithm for determining whether a matching is popular. 
We refer to this characterization as an \emph{optimization-based characterization}. 

\subsection{Our contribution}

The aim of this paper is to 
investigate the connection between the graph-structural and optimization-based characterizations 
for the HA, HAT, and SMI problems. 
By definition, the two characterizations are equivalent in each problem. 
However, the equivalence is explained only in a way that they are characterizing the same fact 
(i.e.,\ a matching is popular), 
and their direct connection has been unclear. 
Our contribution is to prove that the two characterizations are equivalent 
without relying on the fact that they are characterizing popular matchings. 
Namely, 
we show that a matching satisfying the graph-structural characterization 
also satisfies the optimization-based characterization, 
and vice versa, 
for the HA, HAT, and SMI problems. 

In the proof, 
we employ the theory of combinatorial optimization \cite{KV18,Sch03},  
in particular bipartite matchings \cite{LP86} and linear-programming duality \cite{Sch86}. 
Our technical ingredients include 
the duality theory for linear optimization (the strong duality theorem and complementarity slackness), 
the min-max formula for bipartite matchings (K\H{o}nig's theorem \cite{Kon31}), 
the total dual integrality of the linear system 
describing the bipartite matching polytope (Egerv\'ary's theorem \cite{Ege31}), 
the Dulmage-Mendelsohn decomposition \cite{DM58,DM59}, 
and 
alternating paths. 
Specifically, 
each characterization is derived from the other in the following way. 

\paragraph{Deriving the optimization-based characterization}
For all of the HA, HAT and SMI problems, 
the duality theory for linear optimization plays a key role. 
Namely, 
on the basis of the graph-structural characterization, 
we construct a dual feasible solution $y^*$ for the maximum-weight matching problem 
whose objective value is equal to the weight of the current matching $M$, 
proving the optimality of $M$. 
We remark that the obtained dual optimal solution $y^*$ has not only the integrality but also a certain property, 
which is of an independent interest and will be described in Corollaries \ref{CORHA}, \ref{CORHAT}, and \ref{CORSMI}. 

\paragraph{Deriving the graph-structural characterization}
For the HA and HAT problems, 
there exists a dual integral optimal solution $y^*$ corresponding to the maximum-weight matching $M$, 
on the basis of 
the dual integrality of the linear system \cite{Ege31}. 
The optimal solution $y^*$ is proven to be a $\{0,1\}$-vector, 
and leads to the graph-structural characterization. 
We remark that, 
in the HAT problem, 
the min-max theorem for bipartite matching and minimum cover \cite{Kon31} and 
the Dulmage-Mendelsohn decomposition \cite{DM58,DM59} play essential roles. 
For the SMI problem, 
the graph-structural characterization is derived from a combinatorial argument utilizing alternating paths. 

\bigskip

Overall, 
our proofs 
offer a comprehensive understanding that the two types of characterizations are 
essentially equivalent, 
and 
suggest a new interpretation of the graph-structural characterization 
in terms of the dual optimal solution for the maximum-weight matching problem. 
Since the optimization-based characterization is applicable to various problems of popular matchings, 
this interpretation suggests a possibility of deriving a graph-structural characterization 
for some problems of popular matchings for which only the optimization-based characterization is known (see Section \ref{SECcncl}). 

\subsection{Further related work}

Popular matchings can be studied in several principal settings other than those mentioned above. 
Typical problems are the \emph{stable roommate problem with incomplete lists} (\emph{SRI problem}) 
and the \emph{stable roommate problem with ties and incomplete lists} (\emph{SRTI problem}),  
in which the graph $G$ is nonbipartite. 
Chung \cite{Chu00} proved that a stable matching is popular in the SRI problem. 
Moreover, 
the graph-structural characterization for the SMI problem by Huang and Kavitha \cite{HK13} applies to the SRI problem. 
The optimization-based characterization due to Bir\'o et al.\ \cite{BIM10} was indeed given for the SRI and SRTI problems, 
which is followed by a more efficient characterization for the SRI problem by B\'{e}rczi-Kov\'{a}cs and Kosztol\'{a}nyi \cite{BKK25}. 
As far as we are aware, no prior work provides a graph-structural characterization for the SRTI problem. 

Other typical problems include the \emph{capacitated house allocation problem} \cite{MS06}, 
\emph{weighted house allocation problem} \cite{Mes14}, 
\emph{weighted capacitated house allocation problem} \cite{SM10}, 
\emph{popular $b$-matching problem} \cite{Csa24,Pal14}, 
and \emph{popular assignment problem} \cite{KKMSS22soda}. 
Further, 
problems which in particular relate to matroid intersection are 
the \emph{popular branching problem} \cite{KKMSS22,NT23}, 
\emph{popular arborescence problem} \cite{KMSY25}, 
and \emph{popular matching problem with matroid constraints} \cite{CKTY24,Kam16,Kam17,Kam20}. 
A detailed discussion on these problems are given in Section \ref{SECcncl}.

\subsection{Organization of the paper}

In Section \ref{SECpre}, 
we present prior results which are closely related to our work. 
In Sections \ref{SECHA}, \ref{SECHAT}, 
and \ref{SECSMI}, 
we derive the optimization-based characterization from the graph-structural characterization 
and also provide the reverse derivation 
for the HA, HAT, and SMI problems, 
respectively. 
Section \ref{SECcncl} concludes the paper with a summary of our work and possible directions of future work. 

\section{Preliminaries}
\label{SECpre}

\subsection{Basics of maximum matchings}

\label{SECbasics}

Recall that 
$G=(U,V;E)$ denotes a bipartite graph in which every edge in $e\in E$ connects a vertex in $U$ and that in $V$, 
and 
an edge set $M\subseteq E$ is a \emph{matching} if each vertex in $U\cup V$ is adjacent to 
at most one edge in $M$. 
Let $M\subseteq E$ be a matching in $G=(U,V;E)$. 
A vertex is \emph{matched by $M$} if it is incident to an edge in $M$, 
and \emph{unmatched by $M$} otherwise. 
For a vertex $v\in U\cup V$, 
let $M(v)$ denote the vertex matched to $v$ by $M$. 
Note that $M(v)$ is empty if $v$ is unmatched. 
In this case we let $M(v)=\emptyset$. 

A vertex subset $C \subseteq U \cup V$ is a \emph{vertex cover} if 
each edge in $E$ is incident to at least one vertex in $C$. 
The following theorem provides a fundamental min-max formula of matchings and vertex covers. 
\begin{theorem}[K\H{o}nig \cite{Kon31}]
\label{THMkonig}
    In a bipartite graph, 
    the maximum size of a matching is equal to the minimum size of a vertex cover. 
\end{theorem}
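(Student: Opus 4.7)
The plan is to establish the easy inequality $\max\le\min$ by a trivial counting argument, and the nontrivial inequality $\max\ge\min$ by explicitly constructing a vertex cover whose size equals that of a maximum matching, following the classical alternating-path approach. For the easy direction, observe that any vertex cover $C$ must contain at least one endpoint of each of the $|M|$ pairwise vertex-disjoint edges of a matching $M$, and hence $|C|\ge |M|$.

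For the hard direction, I would fix a maximum matching $M$ and let $U_0\subseteq U$ denote the set of $M$-unmatched vertices. Define $Z\subseteq U\cup V$ to be the set of vertices reachable from $U_0$ by $M$-alternating paths (paths that begin with an edge of $E\setminus M$ and alternate between $E\setminus M$ and $M$), and set
\[
C=(U\setminus Z)\cup (V\cap Z).
\]
To see that $C$ is a vertex cover, take any edge $(u,v)\in E$ with $u\in U$ and $v\in V$. If $u\notin Z$ then $u\in C$. If $u\in Z$ and $(u,v)\notin M$, then the alternating path witnessing $u\in Z$ extends by $(u,v)$, so $v\in Z$; if $u\in Z$ and $(u,v)\in M$, then $u\notin U_0$, so the alternating path must reach $u$ through its unique $M$-edge, which is $(u,v)$, giving $v\in Z$. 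In either case $v\in V\cap Z\subseteq C$.

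To bound $|C|$, the maximality of $M$ ensures that every vertex in $V\cap Z$ is matched (otherwise the alternating path to such a vertex would be $M$-augmenting), and by the alternating structure its $M$-partner lies in $U\cap Z$. Dually, every vertex in $U\setminus Z$ is matched (since $U_0\subseteq Z$), and its $M$-partner lies in $V\setminus Z$ (otherwise reachability along the $M$-edge would place $u$ in $Z$). Hence each vertex of $C$ is incident to a distinct edge of $M$, yielding $|C|\le |M|$, which combined with the easy direction gives equality. The main obstacle is the careful verification of these two reachability invariants, which are what prevents double-counting of $M$-edges across $U\setminus Z$ and $V\cap Z$; as an alternative route consistent with the framework developed later in this paper, the theorem can also be derived from linear-programming duality together with the total unimodularity of the bipartite incidence matrix.
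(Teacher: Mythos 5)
Your proof is correct: the easy inequality by disjointness of matching edges, the construction $C=(U\setminus Z)\cup(V\cap Z)$ from the set $Z$ of vertices reachable from the unmatched vertices $U_0$ by alternating paths, the verification that $C$ covers every edge, and the two reachability invariants (every vertex of $V\cap Z$ is matched with partner in $U\cap Z$, every vertex of $U\setminus Z$ is matched with partner in $V\setminus Z$) that prevent double-counting of $M$-edges are all the standard and complete argument. Note, however, that the paper does not prove this statement at all --- it is quoted as a classical result with a citation to K\H{o}nig, so there is no in-paper proof to compare against. Your closing remark is apt: the alternative derivation via linear-programming duality and total unimodularity of the bipartite incidence matrix is precisely the machinery the paper does develop (Lemma~\ref{LEMintmm} and the surrounding discussion), so either route is consistent with the paper's framework; the alternating-path proof is more elementary and self-contained, while the LP route generalizes directly to the weighted setting ($w$-vertex covers) that the paper actually needs.
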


Let $M \subseteq E$ be a matching in $G$. 
An \emph{alternating path with respect to $M$} is a path in which the edges in $M$ and $E \setminus M$ appear alternately. 
An \emph{alternating cycle with respect to $M$} is defined in the same way. 
They may be simply called \emph{alternating path/cycle} if $M$ is clear from the context. 
The length of a path and a cycle is defined by the number of its edges. 

Let $v\in U \cup V$ be a vertex and $M\subseteq E$ be a maximum matching in $G$. 
If there exists an alternating path 
with respect to $M$ of even (resp.,\ odd) length connecting $v$ and a vertex unmatched by $M$, 
then $v$ is called \emph{even} (resp.,\ \emph{odd}). 
If such a path does not exist, 
then $v$ is called \emph{unreachable}. 
On the basis of the Dulmage-Mendelsohn decomposition \cite{DM58,DM59}, 
we have that 
these definitions are independent of the choice of the maximum matching $M$, 
and the following lemma holds. 
\begin{lemma}
\label{LEMDM}
    Let $v$ be a vertex and $M$ be a maximum matching in a bipartite graph. 
    \begin{enumerate}
        \item If $v$ is odd or unreachable, then $v$ is matched by $M$. 
        \label{ENUDM1}
        \item If $v$ is odd, then $M(v)$ is even. 
        \label{ENUDM2}
        \item If $v$ is unreachable, then $M(v)$ is unreachable. 
        \label{ENUDM3}
\end{enumerate}
\end{lemma}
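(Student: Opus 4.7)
The plan is to establish the three claims directly from the definitions of even/odd/unreachable by manipulating alternating paths, using only the uniqueness of the $M$-edge at a matched vertex and the parity of path lengths. The key observation underlying all three parts is: in an alternating path that ends at an unmatched vertex $w$, the edge incident to $w$ is not in $M$, so the $M$-status of the edge at the other endpoint is determined entirely by the parity of the path length. I would use this elementary fact repeatedly rather than invoking the full Dulmage--Mendelsohn decomposition.

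For (i), I would argue the contrapositive: if $v$ is unmatched by $M$, then the trivial path of length $0$ connecting $v$ to itself is an even-length alternating path to an unmatched vertex, so $v$ is even. Hence being odd or unreachable forces $v$ to be matched. For (ii), suppose $v$ is odd via an alternating path $P$ of odd length from $v$ to some unmatched vertex $u$. The edge of $P$ at $u$ lies outside $M$, and alternation together with odd length forces the edge of $P$ at $v$ also to lie outside $M$. By (i), $v$ is matched, so $M(v)$ exists; I would then show that $M(v)$ is not a vertex of $P$. Indeed, $M(v)\ne u$ because $u$ is unmatched, and $M(v)$ cannot appear as an internal vertex of $P$, because then its two incident edges on $P$ would alternate and exactly one would lie in $M$, but the unique $M$-edge at $M(v)$ is $(v,M(v))$, which is not on $P$. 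Prepending $(v,M(v))$ to $P$ therefore yields an alternating path of even length from $M(v)$ to $u$, proving $M(v)$ is even.

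For (iii), I would argue by contradiction, using (i) to know that $v$ is matched, and then splitting into two cases according to whether $M(v)$ is even or odd. If $M(v)$ is even with witness path $P$ of even length to some unmatched $u$, then the edge of $P$ at $M(v)$ lies in $M$; since $(v,M(v))$ is the unique $M$-edge at $M(v)$, the path $P$ must begin with $M(v),v,\ldots$, so its suffix from $v$ to $u$ is an alternating path of odd length, making $v$ odd and contradicting that $v$ is unreachable. If instead $M(v)$ is odd with witness path $P$, the same parity analysis as in (ii) shows the edge of $P$ at $M(v)$ lies outside $M$, and the same argument shows that $v\notin P$, so prepending $(v,M(v))$ produces an alternating path of even length from $v$ to $u$, again contradicting that $v$ is unreachable.

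The main obstacle throughout is verifying that the ``extended'' path is still a simple alternating path, i.e., that the vertex being prepended does not already appear on $P$. This always reduces to the same observation about the uniqueness of the $M$-edge at a matched vertex, but it has to be invoked carefully at each step; everything else is a routine parity argument. I would not need to appeal to the independence of the even/odd/unreachable classification from the choice of $M$, since the three conclusions are formulated for a single fixed maximum matching.
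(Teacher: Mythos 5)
The paper offers no proof of this lemma at all: it simply invokes the Dulmage--Mendelsohn decomposition, so your self-contained path-surgery argument is a genuinely different and more elementary route. Parts (ii) and (iii), and the ``unreachable'' half of part (i), are correct as written: the parity bookkeeping and the verification that the prepended vertex cannot already lie on the witness path (via uniqueness of the $M$-edge at a matched vertex) are exactly the points that need care, and you handle them properly.

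There is, however, a genuine gap in part (i), in the ``odd'' case. From ``$v$ unmatched $\Rightarrow$ the length-$0$ path witnesses that $v$ is even'' you conclude ``hence being odd or unreachable forces $v$ to be matched.'' This inference is valid for ``unreachable'' (being even excludes being unreachable by definition), but not for ``odd'': as the definitions stand, nothing a priori prevents a vertex from being both even and odd, so ``$v$ is even'' does not rule out ``$v$ is odd.'' Indeed, your argument never uses that $M$ is a \emph{maximum} matching, yet part (i) is false without that hypothesis: in the single-edge graph on $\{u,v\}$ with $M=\emptyset$, the vertex $u$ is odd (the edge itself is an odd-length alternating path to the unmatched vertex $v$) but unmatched. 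The missing step is the augmenting-path argument: if $v$ is unmatched and admits an odd-length alternating path $P$ to an unmatched vertex $u$, then by your own parity analysis the edges of $P$ at both endpoints lie outside $M$, so $P$ is an $M$-augmenting path, contradicting the maximality of $M$. With this one-line addition part (i) is complete, and parts (ii) and (iii) --- which rely only on the conclusion of (i) --- go through.
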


If edge weights $w(e)\in \RR$ $(e\in E)$ are associated to $G$, 
then the weighted graph is denoted by $(G,w)$, 
where $w\in \RR^E$ is a vector consisting of the weights $w(e)$ ($e\in E$). 
The following linear program with variable $x\in \RR^E$ provides a linear relaxation of the 
maximum-weight matching problem in $(G,w)$: 
\begin{alignat}{3}
\label{EQLPmm_obj}
&\mbox{Maximize}    &\quad& \sum_{(u,v)\in E} w(u,v)\cdot x(u,v )     && \\
\label{EQLPmm_const1}
&\mbox{subject to}  &\quad& \sum_{v\in \Gamma(u)}x(u,v)\le 1                &\quad& (u\in U),\\
\label{EQLPmm_const2}
&                   &\quad& \sum_{u\in \Gamma(v)}x(u,v)\le 1             &\quad& (v\in V),\\
&                   &\quad&x(u,v)\ge 0                                  &\quad& ((u,v)\in E).
\label{EQLPmm_const3}
\end{alignat}

For an edge subset $M\subseteq E$, 
define the \emph{characteristic vector} $\chi_M \in \{0,1\}^E$ of $M$ by 
$\chi_M(e) = 1$ if $e\in M$ and $\chi_M(e) = 0$ if $e\in E\setminus M$.
It is straightforward to see that 
an integer vector $x\in \ZZ^E$ 
satisfies \eqref{EQLPmm_const1}--\eqref{EQLPmm_const3} 
if and only if $x=\chi_M$ for a matching $M$.

The dual problem of 
\eqref{EQLPmm_obj}--\eqref{EQLPmm_const3} is the following linear program with variable $y\in \RR^{U\cup V} $.
\begin{alignat}{3}
\label{EQdualmm_obj}
&\mbox{Minimize}    &\quad& \sum_{u\in U} y(u) + \sum_{v\in V} y(v)     && \\
\label{EQdualmm_const1}
&\mbox{subject to}  &\quad& y(u) + y(v) \ge w(u,v)                    &\quad& ((u,v)\in E),\\
\label{EQdualmm_const2}
&                   &\quad& y(u)\ge 0                                   &\quad& (u\in U), \\
\label{EQdualmm_const3}
&                   &\quad& y(v)\ge 0                                   &\quad& (v\in V).
\end{alignat}

A vector $y\in \RR^{U\cup V}$ satisfying the constraints \eqref{EQdualmm_const1}--\eqref{EQdualmm_const3} is referred to as a 
\emph{$w$-vertex cover}, 
and 
a $w$-vertex cover is \emph{minimum} if it minimizes the objective function \eqref{EQdualmm_obj}. 

A fundamental theorem in combinatorial optimization states that the optimality of the linear program \eqref{EQLPmm_obj}--\eqref{EQLPmm_const3} is attained by an integral solution, 
and 
the dual problem \eqref{EQdualmm_obj}--\eqref{EQdualmm_const3} as well if $w$ is integral. 
Namely, 
the characteristic vector of a maximum-weight matching is an optimal solution of \eqref{EQLPmm_obj}--\eqref{EQLPmm_const3}, 
and 
there exists a minimum $w$-vertex cover which is integral. 
This fact can be derived from the total unimodularity of the incidence matrix of a bipartite graph, i.e.,\ 
the coefficient matrix of the linear system \eqref{EQLPmm_const1}--\eqref{EQLPmm_const2}, 
while it is specifically known as Egerv\'ary's theorem \cite{Ege31} for bipartite matchings.

\begin{lemma}[\cite{Ege31}, see also \protect{\cite[Theorem 17.1 and Section 18.3]{Sch03}}]
\label{LEMintmm}
The linear program \eqref{EQLPmm_obj}--\eqref{EQLPmm_const3} has an integral optimal solution. 
Moreover, 
if $w$ is integral, 
then the dual problem \eqref{EQdualmm_obj}--\eqref{EQdualmm_const3} also has an integral optimal solution. 
\end{lemma}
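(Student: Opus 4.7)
The plan is to derive both integrality statements from the total unimodularity of the incidence matrix of the bipartite graph $G$, and then apply the standard theorem of Hoffman and Kruskal that totally unimodular constraint matrices with integral right-hand sides yield linear programs whose vertices are integral.

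First I would prove that the coefficient matrix $A\in\{0,1\}^{(U\cup V)\times E}$ of the constraints \eqref{EQLPmm_const1}--\eqref{EQLPmm_const2} is totally unimodular. The argument is the classical induction on the size of a square submatrix $B$ of $A$. If $B$ contains a zero column, then $\det B=0$. If $B$ contains a column with a single nonzero entry, expand the determinant along that column and apply induction. Otherwise, every column of $B$ contains exactly two ones, one in a row corresponding to a vertex of $U$ and one in a row corresponding to a vertex of $V$ (because every edge has exactly one endpoint in each color class). Summing the $U$-rows of $B$ therefore yields the same vector as summing the $V$-rows, so the rows of $B$ are linearly dependent and $\det B=0$. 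Hence every square submatrix of $A$ has determinant in $\{0,\pm1\}$.

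Next I would invoke the fact that for a totally unimodular matrix $A$ and an integral vector $b$, every vertex of the polyhedron $\{x : Ax\le b,\ x\ge 0\}$ is integral. Applied to the primal, where the right-hand side is the all-ones vector (hence integral) and the feasible region is bounded (each variable lies in $[0,1]$), this gives an integral vertex optimum, establishing the first half of the lemma. For the dual, the constraint matrix is (up to sign) the transpose of $A$ augmented by an identity block for the nonnegativity of $y$; transposes of totally unimodular matrices are totally unimodular, so the dual constraint matrix is again totally unimodular. Under the assumption that $w$ is integral, the right-hand side of the dual is integral as well, and the dual admits an integral optimum provided an optimum exists. Existence of a dual optimum follows from LP duality, since the primal has a finite optimum.

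The only subtle point is that the cited form of the theorem guarantees an integral \emph{vertex}, which requires a vertex to exist. For the primal this is immediate from boundedness. For the dual one must either observe that an optimal vertex exists because the feasible region has no lines (the nonnegativity constraints $y\ge 0$ force pointedness), or alternatively reformulate the dual so as to make the vertex structure manifest. This is the part I would treat with most care, but it is entirely standard; no combinatorial obstacle appears. The whole argument is otherwise a direct transcription of the proof of Egerv\'ary's theorem as found in \cite{Sch03}.
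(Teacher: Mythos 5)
Your proposal is correct and follows exactly the route the paper itself indicates: the paper states this lemma as a cited result (Egerv\'ary, Schrijver) and remarks only that it ``can be derived from the total unimodularity of the incidence matrix of a bipartite graph,'' which is precisely the argument you spell out via the determinant induction and the Hoffman--Kruskal theorem. Your additional care about the existence of optimal vertices (boundedness of the primal, pointedness of the dual from $y\ge 0$, existence of a dual optimum via LP duality) is sound and fills in the standard details the paper leaves to the references.
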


\subsection{The house allocation problem}

In describing the house allocation problem (HA problem), 
we often use the terminology of \emph{applicants} and  
\emph{houses}, 
to emphasize that the vertices on only one side have preferences over the other side. 
Let $G=(A,H;E)$ be a bipartite graph, 
where $A$ denotes the set of the applicants and 
$H$ that of the houses. 
Without loss of generality, 
assume that no vertex is isolated. 
Recall that
$\Gamma (u)$ denotes the set of vertices adjacent to $u$ 
for each vertex $u\in A \cup H$. 
Namely, 
$\Gamma (a) =\{h\in H \colon (a,h)\in E\}$ for each $a\in A$ and 
$\Gamma (h) =\{a\in A \colon (a,h)\in E\}$ for each $h\in H$.

Each applicant $a\in A$ has preferences over the adjacent houses, 
which is described by a totally ordered set $(\Gamma (a) \cup \{\emptyset\}, \succ_a)$. 
If $h\succ_a h'$ holds for two houses $h,h' \in \Gamma(a)$, 
it means that $a$ prefers $h$ to $h'$. 
Recall that $h\succ_a \emptyset$ holds for each $a\in A$ and each $h\in \Gamma(a)$. 

In order 
to describe the case in which $a$ is unmatched  
in the same manner as the case in which $a$ is matched, 
for each applicant $a \in A$,  
we 
add a dummy house $l(a)$ to $H$, 
and an an edge $(a,l(a))$ to $E$. 
The dummy house $l(a)$ is referred to as the \emph{last resort} of $a$, 
and is assumed to be the house least preferred by $a$. 
Now the totally ordered set $(\Gamma (a) \cup \{\emptyset\}, \succ_a)$ is simply denoted by 
$(\Gamma (a), \succ_a)$, 
and hereafter we only discuss matchings in which every applicant $a\in A$ is matched. 
In order to emphasize that each applicant in $A$ is matched, 
a matching is typically referred to as \emph{$A$-perfect}. 

Now an instance of the HA problem is represented by a tuple 
\begin{align*}
    (G=(A,H;E),(\Gamma(a),\succ_a)_{a\in A}).
\end{align*}
For two matchings $M$ and $N$ in $G$, 
let $\Delta(M,N)$ denote the number of applicants who prefer $M$ to $N$, 
formally, 
\begin{align*}
\Delta (M,N) = |\{a\in A \colon M(a) \succ_a N(a)\}| - |\{a\in A \colon N(a) \succ_a M(a)\}|. 
\end{align*}
A matching $M$ is \emph{popular} if 
$\Delta(M,N) - \Delta(N,M) \ge 0$ for each matching $N$ in $G$.


The graph-structural characterization of popular matchings in the HA problem \cite{AIKM07} is described as follows. 
Let $a\in A$. 
Let $f(a) \in \Gamma(a)$ denote the house most preferred by $a$, 
and define $H_f\subseteq H$ by 
\begin{align*}
H_f=\bigcup_{a\in A}\{f(a)\}=\{h\in H\colon \mbox{$h=f(a)$ for some $a\in A$}\}.     
\end{align*}
Let $s(a)$ denote the house in $\Gamma(a) \setminus H_f$ most preferred by $a$. 
Note that $s(a)$ always exists due to the addition of the last resort $l(a)$. 

\begin{theorem}[\cite{AIKM07}]
    \label{THMgraphHA}
    In an instance $(G=(A,H;E),(\Gamma(a),\succ_a)_{a\in A})$ of the HA problem, 
    a matching $M\subseteq E$ is popular if and only if it satisfies the following two conditions. 
    \begin{enumerate}
        \item Each house $h \in H_f$ is matched by $M$. 
            \label{ENUgraphHA1}
        \item For each applicant $a\in A$, it holds that $M(a)\in \{f(a),s(a)\}$. 
            \label{ENUgraphHA2}
    \end{enumerate}
\end{theorem}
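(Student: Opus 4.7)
My plan is to prove the two directions separately: sufficiency $(\Leftarrow)$ by an explicit charging argument, and necessity $(\Rightarrow)$ by contradiction via local modifications of $M$.

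For sufficiency, I would fix an arbitrary $A$-perfect matching $N$ and exhibit an injection $\phi$ from $T := \{a \in A : N(a) \succ_a M(a)\}$ into $S := \{a \in A : M(a) \succ_a N(a)\}$, defined by $\phi(a) := M^{-1}(N(a))$. The existence of such an injection gives $\Delta(M,N) \geq \Delta(N,M)$, hence popularity. The structural facts that make this work fall out cleanly from (i) and (ii): for $a \in T$, condition (ii) combined with $N(a) \succ_a M(a)$ forces $M(a) = s(a)$ (since nothing beats $f(a)$), and then the definition of $s(a)$ as the best house in $\Gamma(a) \setminus H_f$ forces $N(a) \in H_f$; condition (i) then ensures $N(a)$ is matched by $M$, so $\phi(a)$ is well-defined, and injectivity is immediate from $N$ being a matching. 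To verify $\phi(a) \in S$, I would apply (ii) to $\phi(a)$: the equality $M(\phi(a)) = N(a) \in H_f$ rules out $M(\phi(a)) = s(\phi(a))$ (because $s(\cdot) \notin H_f$), leaving $M(\phi(a)) = f(\phi(a))$, i.e., $\phi(a)$ currently enjoys its top choice; and $a \neq \phi(a)$ together with $N$ being a matching yields $N(\phi(a)) \neq M(\phi(a))$, so $\phi(a)$ strictly prefers $M$.

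For necessity, I would prove the contrapositive and, whenever (i) or (ii) is violated, exhibit a matching $N$ with $\Delta(N,M) > \Delta(M,N)$. Failure of (i) is handled by a single-edge swap: if $h = f(a) \in H_f$ is unmatched in $M$, then $N := (M \setminus \{(a, M(a))\}) \cup \{(a, h)\}$ strictly benefits only $a$. Failure of (ii) at some applicant $a$ splits into two subcases according to whether $M(a) \in H_f \setminus \{f(a)\}$ (in which case $f(a) \succ_a M(a)$) or $M(a) \in \Gamma(a) \setminus (H_f \cup \{s(a)\})$ (in which case $s(a) \succ_a M(a)$). In each subcase I would reassign $a$ to its preferred target and propagate the resulting displacement along an alternating path in the auxiliary bipartite graph with edge set $\bigcup_{a' \in A}\{(a', f(a')), (a', s(a'))\}$, terminating either at an unmatched house or at an applicant whose own $f$- or $s$-choice absorbs the shift.

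The main obstacle will be the necessity of (ii). A naive single swap typically creates one strict gainer and one strict loser, netting zero and failing to contradict popularity. The fix is to build a longer alternating path in the auxiliary graph so that every intermediate applicant is reassigned only between its $f$- and $s$-choice (a weak change at worst), while the terminal applicant lands on a strictly better house or an empty slot, producing the required strict surplus of at least one. Proving that such a path exists, is finite, and yields a genuine net improvement—ruling out cancellation via cycles or infinite chains—is the technical heart of the argument; once this is in hand, the remaining pieces (the sufficiency charging map and the single-edge swap for (i)) are essentially routine.
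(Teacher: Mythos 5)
First, note that the paper does not prove Theorem \ref{THMgraphHA} itself: it is quoted from \cite{AIKM07}, and what Section \ref{SECHA} proves instead is the equivalence of conditions \ref{ENUgraphHA1}--\ref{ENUgraphHA2} with the maximum-weight characterization of Theorem \ref{THMopt}, via an explicit $\{0,1\}$ dual solution built from $A_s$ and $H_f$ in one direction and via an integral dual optimum in the other. Your proposal is therefore a genuinely different, direct combinatorial route. Its sufficiency half is complete and correct: for $a$ with $N(a)\succ_a M(a)$, condition \ref{ENUgraphHA2} forces $M(a)=s(a)$, the maximality of $s(a)$ in $\Gamma(a)\setminus H_f$ forces $N(a)\in H_f$, condition \ref{ENUgraphHA1} makes $\phi(a)=M(N(a))$ well defined, and $s(\phi(a))\notin H_f$ forces $M(\phi(a))=f(\phi(a))\succ_{\phi(a)}N(\phi(a))$; injectivity of $\phi$ then gives $|S|\ge|T|$ and hence popularity. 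This is a clean charging argument that the paper's LP-duality machinery replaces but does not require.

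The necessity half, however, contains a genuine gap that you yourself flag: the existence, termination, and strict net gain of the propagation along the auxiliary $f$/$s$-graph is asserted, not proved, and as sketched it would not go through. The missing ingredient is an intermediate structural lemma: \emph{if $M$ is popular, then every house $h\in H_f$ is matched by $M$ to an applicant $b$ with $f(b)=h$} (not merely matched). This is what caps your ``propagation'' at bounded length and rules out the cancellation you worry about. Concretely, if $h=f(a_1)$ is matched to $b$ with $f(b)\neq h$, the three-move matching $N$ that sends $a_1\to h$, $b\to f(b)$, and evicts $M(f(b))$ to its last resort produces two strict gainers and at most one strict loser, contradicting popularity; one checks the three applicants involved are distinct. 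With this lemma in hand, the case $s(a)\succ_a M(a)$ of condition \ref{ENUgraphHA2} is handled by the analogous bounded modification $a\to s(a)$, $b=M(s(a))\to f(b)$, evict $M(f(b))$ (again net $+1$), and the case $M(a)\succ_a s(a)$ with $M(a)\neq f(a)$ follows immediately because then $M(a)\in H_f$ and the lemma forces $M(a)=f(a)$. Without isolating this lemma, your appeal to an unbounded alternating path in the auxiliary graph leaves the ``technical heart'' unproven, so the necessity direction is incomplete as written.
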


The optimization-based characterization \cite{BIM10} 
is described in the following way. 
For a matching $M$ in $G$, 
define 
an edge-weight vector $w_M\in \{0,1,2\}^E$ by 
\begin{align}
\label{EQweightHA}
    w_M(a,h)=
    \begin{cases}
        2 & (h \succ_a M(a)), \\
        1 & (h = M(a)), \\
        0 & (M(a) \succ_a h). 
    \end{cases}
\end{align}

\begin{theorem}[\cite{BIM10}]
    \label{THMopt}
    In an instance $(G=(A,H;E),(\Gamma(a),\succ_a)_{a\in A})$ of the HA problem, 
    a matching $M\subseteq E$ is popular if and only if $M$ is a maximum-weight $A$-perfect matching in $(G,w_M)$ 
    with weight vector $w_M$ defined by \eqref{EQweightHA}. 
\end{theorem}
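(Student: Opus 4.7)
The plan is to reduce both directions of the equivalence to a single weight-comparison identity. Specifically, I would prove that for any $A$-perfect matching $N$ the equality
\[
w_M(N) - w_M(M) \;=\; \Delta(N, M)
\]
holds. Combined with the antisymmetry $\Delta(M, N) = -\Delta(N, M)$ that is immediate from the definition, this identity turns the popularity condition $\Delta(M, N) - \Delta(N, M) \ge 0$ for every $N$, which is equivalent to $\Delta(N, M) \le 0$, into the inequality $w_M(N) \le w_M(M)$. Since $w_M(M) = |A|$ by construction, this last inequality is precisely the statement that $M$ is a maximum-weight $A$-perfect matching in $(G, w_M)$, and both directions of the theorem follow simultaneously.

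The key step is the identity itself. Since both $M$ and $N$ are $A$-perfect, each applicant $a \in A$ contributes exactly one edge to each matching, and the edge $(a, M(a)) \in M$ has weight $1$ under $w_M$ by \eqref{EQweightHA}. I would partition $A$ into three classes according to the trichotomy $N(a) \succ_a M(a)$, $N(a) = M(a)$, or $M(a) \succ_a N(a)$: the edge $(a, N(a))$ has $w_M$-weight $2$, $1$, or $0$, respectively, so its contribution to $w_M(N) - w_M(M)$ is $+1$, $0$, or $-1$. These coincide exactly with the per-applicant contributions to $\Delta(N, M) = |\{a \colon N(a) \succ_a M(a)\}| - |\{a \colon M(a) \succ_a N(a)\}|$, and summing over $a \in A$ yields the identity.

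I do not foresee a serious obstacle; the argument is essentially line-by-line bookkeeping driven by the specific choice of weights $\{0,1,2\}$ in \eqref{EQweightHA}. The only point that deserves a brief remark is that popularity is stated against \emph{all} matchings while the maximum-weight condition is stated against $A$-perfect matchings. This is harmless because the last resorts $l(a)$ are least preferred under $\succ_a$, so any matching can be completed to an $A$-perfect one without changing the sign of $\Delta$ relative to $M$. This is already the working convention of the paper, so no separate step is needed in the body of the proof.
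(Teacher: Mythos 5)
Your argument is correct, but it is worth pointing out that the paper does not prove Theorem~\ref{THMopt} at all: the statement is imported from Bir\'o, Irving, and Manlove \cite{BIM10}, and the paper's own work in Section~\ref{SECHA} addresses a different question, namely deriving Theorem~\ref{THMopt} from the graph-structural characterization of Theorem~\ref{THMgraphHA} (and conversely) via LP duality, deliberately \emph{without} passing through the definition of popularity. Your weight-comparison identity $w_M(N)-w_M(M)=\Delta(N,M)$ for $A$-perfect $N$, together with $w_M(M)=|A|$, is the standard direct proof of the optimization-based characterization and is in essence the argument behind \cite{BIM10}; it is clean, it handles both directions at once, and the per-applicant trichotomy you use is exactly the right bookkeeping (the reduction of arbitrary matchings to $A$-perfect ones via last resorts is indeed the paper's standing convention, so your closing remark suffices). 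What this route does not give---and what the paper's duality machinery is designed to extract---is any structural information about popular matchings: the sets $H_f$, $A_f$, $A_s$, the $\{0,1\}$ dual optimal solution of Corollary~\ref{CORHA}, or the link to Theorem~\ref{THMgraphHA}. So your proof is the appropriate one for the statement as posed, but it is orthogonal to the equivalence the paper actually establishes.
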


\subsection{The house allocation problem with ties}
    
The house allocation problem with ties (HAT problem) is defined by 
allowing for ties in the preferences in the HA problem. 
The preferences of an applicant $a\in A$ is represented by $(\Gamma(a), \succsim_a)$. 
Here, $h \succ_a h'$ means that $a$ strictly prefers a house $h\in \Gamma(a)$ to another house $h' \in\Gamma(a)$, 
and 
$h \sim_a h'$ that $a$ equally prefers $h,h'\in \Gamma(a)$. 
Also, 
$h \succsim_a h'$ means $h \succ_a h'$ or $h \sim_a h'$. 
Now an instance of the HAT problem is represented by a tuple
$(G=(A,H;E),(\Gamma(a),\succsim_a)_{a\in A})$. 
The definition of a popular matching is straightforwardly extended.

The graph-structural characterization of a popular matching in the HA problem (Theorem \ref{THMgraphHA}) is extended by utilizing the Dulmage-Mendelsohn decomposition \cite{DM58,DM59}. 
For each applicant $a\in A$, 
let $f(a)\subseteq H$ denote the set of houses most preferred by $a$. 
Note that $f(a)$ denotes a set of houses, 
while it has denoted one house in the HA problem. 
Construct a bipartite graph $G_f = (A, H; E_f)$, 
where 
\begin{align*}
    E_f= \{ (a,h) \colon a\in A, h\in f(a) \}. 
\end{align*}
For a matching $M\subseteq E$, 
let $M_f = M \cap E_f$. 

We now apply the Dulmage-Mendelsohn decomposition to $G_f$ to partition the vertices in $A\cup H$ 
into those that are even, those that are odd, and those that are unreachable. 
For an applicant $a\in A$, 
define $s(a)\subseteq H$ as 
the set of even houses most preferred by $a$. 
Note that 
$s(a) \neq \emptyset$ because the last resort $l(a)$ is always isolated in $G_f$ and hence even. 
Also observe that possibly $s(a)\subseteq f(a)$, 
and otherwise $f(a) \cap s(a) = \emptyset$. 

\begin{theorem}[\cite{AIKM07}]
    \label{THMgraphHAT}
    In an instance $(G=(A,H;E),(\Gamma(a),\succsim_a)_{a\in A})$ of the HAT problem, 
    a matching $M\subseteq E$ is popular if and only if it satisfies the following two conditions. 
    \begin{enumerate}
        \item $M_f$ is a maximum matching in $G_f$.
        \label{ENUgraphHAT1}
        \item For each applicant $a\in A$, it holds that $M(a)\in f(a)\cup s(a)$. 
        \label{ENUgraphHAT2}
    \end{enumerate}
\end{theorem}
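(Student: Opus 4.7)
The plan is to prove the two directions separately: necessity via exchange arguments on alternating paths in $G_f$, and sufficiency via an explicit LP-dual certificate guided by the Dulmage--Mendelsohn decomposition of $G_f$.

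For necessity, assume $M$ is popular. If (i) fails, pick an $M_f$-augmenting path $P$ in $G_f$ and let $N$ be obtained from $M$ by flipping the $M_f$-edges along $P$ and, at the applicant endpoint $a_0$ of $P$, replacing the non-$f$ edge $(a_0, M(a_0))$ by the $f$-edge that $P$ prescribes. Every internal applicant of $P$ is indifferent between its top-choice $M$-match and top-choice $N$-match, while $a_0$ strictly improves, so $\Delta(N, M) > 0$, contradicting popularity. If (ii) fails for some $a^*$ with $M(a^*) \notin f(a^*) \cup s(a^*)$, pick $h^* \in s(a^*)$; since $h^*$ is even in $G_f$, there is an $M_f$-alternating path from $h^*$ to an $M_f$-unmatched vertex. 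Shifting $M$ along this path and reassigning $a^*$ to $h^*$ produces $N$ with $\Delta(N, M) > 0$ by an analogous count.

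For sufficiency, assume (i) and (ii). By the HAT analogue of Theorem \ref{THMopt}, it suffices to show that $M$ is a maximum-weight $A$-perfect matching in $(G, w_M)$; for this I construct a feasible LP-dual solution $y \in \RR^{A \cup H}$ whose objective matches the weight of $M$. Set $y(h) = 0$ for every even house of $G_f$ and $y(h) = 1$ for every odd or unreachable house, then set $y(a) = w_M(a, M(a)) - y(M(a))$ for every applicant. Complementary slackness on the edges of $M$ holds by construction, and Lemma \ref{LEMDM} combined with (i) forces every odd or unreachable house to be $M_f$-matched to an even applicant, which yields $y(a) = 0$ at such applicants and makes the dual objective equal to $w_M(M)$. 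Feasibility $y(a) + y(h) \ge w_M(a, h)$ is then verified by case analysis on the DM-class of $h$ and on whether $h \in f(a)$.

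The main obstacle is this last feasibility check, in particular the case where $h$ is even, $h \notin f(a)$, and $M(a) \in s(a)$: the inequality reduces to $w_M(a, h) \le y(a)$, which is equivalent to $a$ weakly preferring $M(a)$ to $h$, and this is precisely what the definition of $s(a)$ as the top even neighbor of $a$ provides. A secondary, milder difficulty is combining Lemma \ref{LEMDM} with the even/odd/unreachable labelling to handle the cases where $h$ is odd or unreachable; once those are disposed of, the rest of the verification is routine.
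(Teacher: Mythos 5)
Your sufficiency argument is essentially the paper's own: the dual vector you build (zero on even houses, one on odd or unreachable houses, and $y(a)=1-y(M(a))$ on applicants) coincides with the $y^*$ of \eqref{EQyaHAT}--\eqref{EQyhHAT}, and your feasibility case analysis, including the key case of an even house $h$ against $M(a)\in s(a)$, is exactly the content of Lemma \ref{LEMdualHAT}. (One small inaccuracy: an unreachable house is $M_f$-matched to an unreachable applicant, not to an even one; your conclusion $y(a)=0$ still holds because unreachable applicants also receive $y(a)=0$ under your formula.)

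The necessity half has a genuine gap. Your exchange for Property \ref{ENUgraphHAT1} assumes that flipping along an $M_f$-augmenting path in $G_f$ yields a matching $N$ with one strict winner and no losers, but the house endpoint $h_k$ of that path is only guaranteed to be unmatched by $M_f$, not by $M$: it may be occupied in $M$ by an applicant $a'$ with $h_k\notin f(a')$. Take $A=\{a_1,a_2\}$ with $a_1\colon h_1\succ h_2$, $a_2\colon h_2\succ h_1$, and $M=\{(a_1,h_2),(a_2,h_1)\}$; then $M_f=\emptyset$, the single edge $(a_1,h_1)$ is an $M_f$-augmenting path, and your recipe either assigns $h_1$ to two applicants or must displace $a_2$, in which case $\Delta(N,M)=1-1=0$ rather than $\Delta(N,M)>0$. (This $M$ is indeed unpopular, but the witness is the matching that also re-houses the displaced applicant; your construction does not produce it, and in general the resulting displacement chain has to be analysed.) The exchange for Property \ref{ENUgraphHAT2} suffers from the same endpoint problem and from a second one: condition (ii) can fail because $M(a^*)$ is an odd or unreachable house that $a^*$ strictly prefers to every house in $s(a^*)$, and in that case reassigning $a^*$ to $h^*\in s(a^*)$ makes $a^*$ strictly worse off, so the ``analogous count'' cannot yield $\Delta(N,M)>0$; that case requires a genuinely different coalition. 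Note that the paper never proves necessity by direct exchanges: it invokes Theorem \ref{THMoptHAT} and then, in Section \ref{SECHAT}, derives Properties \ref{ENUgraphHAT1} and \ref{ENUgraphHAT2} from maximum-weight optimality via an integral dual optimum, the cover $A_1'\cup H_1$, and K\H{o}nig's theorem; adopting that route, or repairing the exchanges along the lines of \cite{AIKM07}, would close the gap.
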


An optimization-based characterization of a popular matching in the HAT problem is obtained from Theorem \ref{THMopt} by 
involving the ties in the preferences into the definition of the weight vector $w_M$. 
Let $M$ be a matching in $G$. 
Define a vector $w_M\in \{0,1,2\}^E$ by 
\begin{align}
    \label{EQweightHAT}
    w_M(a,h)=
    \begin{cases}
        2 & (h \succ_a M(a)), \\
        1 & (h \sim_a M(a)), \\
        0 & (M(a) \succ_a h). 
    \end{cases}
\end{align}

\begin{theorem}[\cite{BIM10}]
    \label{THMoptHAT}
    In an instance $(G=(A,H;E),(\Gamma(a),\succsim_a)_{a\in A})$ of the HAT problem, 
    a matching $M\subseteq E$ is popular if and only if $M$ is a maximum-weight $A$-perfect matching in $(G,w_M)$ 
    with weight vector $w_M$ defined by \eqref{EQweightHAT}. 
\end{theorem}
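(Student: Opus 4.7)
The plan is to reduce Theorem~\ref{THMoptHAT} to the single identity
\[
w_M(N) - w_M(M) = \Delta(N, M) \quad \text{for every $A$-perfect matching } N,
\]
from which both directions of the biconditional fall out at once. The approach is a direct calculation exploiting the specific weight scheme $\{0, 1, 2\}$ in \eqref{EQweightHAT}, and requires no structural matching theory at all.

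First I would compute $w_M(M) = \sum_{a \in A} w_M(a, M(a))$: since $M(a) \sim_a M(a)$ holds tautologically for every $a \in A$, definition \eqref{EQweightHAT} assigns weight $1$ to each edge of $M$, giving $w_M(M) = |A|$. Second, for any $A$-perfect matching $N$, I would partition $A$ into three sets according to how $N(a)$ compares with $M(a)$ under $\succsim_a$: let $p$, $q$, and $r$ denote the number of applicants with $N(a) \succ_a M(a)$, with $N(a) \sim_a M(a)$, and with $M(a) \succ_a N(a)$, respectively. Using \eqref{EQweightHAT} once more gives $w_M(N) = 2p + q$, while $p + q + r = |A|$. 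Subtracting yields
\[
w_M(N) - w_M(M) = 2p + q - (p + q + r) = p - r,
\]
which is exactly $\Delta(N, M)$ in the HAT convention where only the applicants vote.

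With this identity in hand the theorem is a one-liner: $M$ is a maximum-weight $A$-perfect matching in $(G, w_M)$ iff $w_M(N) \le w_M(M)$ for every $A$-perfect $N$, iff $\Delta(N, M) \le 0$ for every such $N$, iff $\Delta(M, N) \ge 0$, which is precisely the definition of popularity in the HAT problem (recall that the last-resort convention forces every matching under consideration to be $A$-perfect, so no issue with unmatched applicants arises). There is no real obstacle in the argument; the only care-point is to recognise that the weight scheme $\{0,1,2\}$ is engineered so that $w_M(M) = |A| = p+q+r$ telescopes cleanly against $w_M(N) = 2p+q$, making the maximum-weight and the vote-counting characterizations differ by a constant rather than in any nontrivial way. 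Specializing to $q = 0$ recovers the analogous proof of Theorem~\ref{THMopt} for the HA problem as a free corollary.
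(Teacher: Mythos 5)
Your argument is correct, and it is essentially the original vote-counting proof from \cite{BIM10}: the identity $w_M(N)-w_M(M)=\Delta(N,M)$ for $A$-perfect $N$ (via $w_M(M)=|A|=p+q+r$ and $w_M(N)=2p+q$) does prove the biconditional in one line, and your remark that the last-resort convention restricts attention to $A$-perfect matchings correctly disposes of the only delicate point. Be aware, however, that the paper does not prove this statement at all --- it is quoted verbatim from \cite{BIM10} --- and the paper's actual work in Section \ref{SECHAT} is a different equivalence: it shows that the maximum-weight condition of Theorem \ref{THMoptHAT} is equivalent to the graph-structural conditions of Theorem \ref{THMgraphHAT} \emph{directly}, without ever invoking popularity. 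That route is much heavier (an explicit $\{0,1\}$ dual feasible solution built from the partition $\{A_f,A_s\}$ and the Dulmage--Mendelsohn decomposition of $G_f$, plus K\H{o}nig's theorem and complementary slackness in the converse direction), but it is what buys the paper's structural payoff, namely the combinatorial description of a minimum $w_M$-vertex cover in Corollaries \ref{CORwcoverHAT} and \ref{CORHAT}; your telescoping argument, being oblivious to the structure of maximum-weight matchings, cannot yield that. One small inaccuracy in your closing remark: in the HA problem $q$ is the number of applicants with $N(a)=M(a)$, which need not be zero, so Theorem \ref{THMopt} is recovered by reading $\sim_a$ as equality rather than by setting $q=0$; the computation is otherwise unchanged.
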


\subsection{The stable matching problem with incomplete lists}

The stable matching problem with incomplete lists (SMI problem) consists of two sides of agents, 
each having preferences. 
Let $G=(U,V;E)$ be a bipartite graph, 
in which each vertex in $U\cup V$ represents an agent. 
Again, 
assume that no vertex is isolated. 
Each agent $u\in U\cup V$ has strict preferences $(\Gamma(u) \cup \{\emptyset\}, \succ_u)$, 
in which $\emptyset$ is the least preferred. 
Note that we do not add last resorts in the SMI problem. 
An instance of the SMI problem is represented by a tuple 
\begin{align*}
    (G=(U,V;E),(\Gamma(u)\cup \{\emptyset\},\succ_u)_{u\in U\cup V}). 
\end{align*}

For two matchings $M,N\subseteq E$ in $G$, 
define $\Delta(M,N)\in \ZZ$ by 
\begin{align*}
    \Delta (M,N) = |\{u\in U\cup V \colon M(u) \succ_u N(u)\}| - |\{u\in U\cup V \colon N(u) \succ_u M(u)\}|. 
\end{align*}
A matching $M$ is \emph{popular} if 
$\Delta(M,N)-\Delta(N,M)\ge 0$ holds for each matching $N$ in $G$. 

Huang and Kavitha \cite{HK13} gave the following characterization of popular matchings in the SMI problem. 
Let 
\begin{align*}
    (G=(U,V;E),(\Gamma(u) \cup \{\emptyset\},\succ_u)_{u\in U\cup V})
\end{align*}
be an instance of the SMI problem, 
and let $M\subseteq E$ be a matching in $G$. 
For each edge $e=(u,v)\in E$, 
define its \emph{label} $(\alpha_M(e),\beta_M(e))$ by 
\begin{align}
    \label{EQalpha}
    &{}\alpha_M(e) = 
    \begin{cases}
        + & (\mbox{$u$ is matched and $v\succ_u M(u)$, or $u$ is unmatched}), \\
        0 & (e\in M), \\
        - & (\mbox{$u$ is matched and $M(u)\succ_u v$)}, 
    \end{cases} \\
    \label{EQdelta}
    &{}\beta_M(e) = 
    \begin{cases}
        + & (\mbox{$v$ is matched and $u\succ_v M(v)$, or $v$ is unmatched}), \\
        0 & (e\in M), \\
        - & (\mbox{$v$ is matched and $M(v)\succ_v u$)}. 
    \end{cases}
\end{align}
Let $G_M^+=(U, V; E_M^+)$ be a graph obtained from $G$ by deleting every edge $e\in E$ with 
$(\alpha_M(e), \beta_M(e))=(-,-)$. 

\begin{theorem}[\cite{HK13}]
    \label{THMgraphSMI}
    In an instance $(G=(U,V;E),(\Gamma(u)\cup \{\emptyset\},\succ_u)_{u\in U\cup V})$ of the SMI problem, 
    a matching $M\subseteq E$ in $G$ is popular if and only if it satisfies the following conditions. 
    \begin{enumerate}
        \item $G_M^+$ has no alternating cycle with respect to $M$ including an edge $e$ with label $(\alpha_M(e),\beta_M(e))=(+,+)$. 
        \label{ENUgraphSMI1}
        \item $G_M^+$ has no alternating path with respect to $M$ including an endpoint unmatched by $M$ and an edge $e$ with label $(\alpha_M(e),\beta_M(e))=(+,+)$. 
        \label{ENUgraphSMI2}
        \item $G_M^+$ has no alternating path with respect to $M$ including two or more edges $e$ with label $(\alpha_M(e),\beta_M(e))=(+,+)$. 
        \label{ENUgraphSMI3}
    \end{enumerate}
\end{theorem}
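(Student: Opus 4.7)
The plan is to establish Theorem \ref{THMgraphSMI} by routing it through the optimization-based characterization of popular matchings in the SMI problem due to Bir\'o, Irving, and Manlove \cite{BIM10}, which asserts that $M$ is popular if and only if $M$ is a maximum-weight matching in $(G, w_M)$ under a weight vector $w_M$ defined from the labels $(\alpha_M, \beta_M)$ by identifying $+, 0, -$ with $+1, 0, -1$ and summing the two coordinates. Since $w_M(M) = 0$, popularity is equivalent to the non-existence of a matching $N \subseteq E$ with $w_M(N) > 0$. It then suffices to show that conditions (i)--(iii) are jointly equivalent to this maximality condition.

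For the forward direction (conditions (i)--(iii) $\Rightarrow$ max-weight), the plan is to invoke linear-programming duality by exhibiting an integral dual feasible solution $y^\ast \in \RR^{U \cup V}$ to the max-weight matching LP with $\sum_{x \in U \cup V} y^\ast(x) = 0$, which certifies the optimality of $M$ via weak duality. The construction would assign values $y^\ast(x) \in \{-1, 0, +1\}$ based on the reachability structure of vertices in $G_M^+$ with respect to $M$: the three conditions preclude precisely those configurations that would force an inconsistency in a $\pm 1$-labeling on the auxiliary graph formed by $M$ together with the $(+,+)$-edges. Feasibility $y^\ast(u) + y^\ast(v) \geq w_M(u,v)$ is then checked edge by edge, including on edges outside $G_M^+$ (those with label $(-,-)$, for which the constraint reads $\geq -2$).

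For the reverse direction (max-weight $\Rightarrow$ (i)--(iii)), the plan is a contrapositive combinatorial argument using alternating paths and cycles. If (i) fails, there is an alternating cycle $C$ in $G_M^+$ containing at least one $(+,+)$-edge; then $N = M \triangle C$ is a matching with $w_M(N) - w_M(M) \geq 2$ since every edge of $C \setminus M$ has weight in $\{0,+2\}$ and at least one has weight $+2$, while edges of $C \cap M$ have weight $0$. If (ii) fails, an alternating path from an $M$-unmatched endpoint through a $(+,+)$-edge provides an augmentation yielding $N$ with $w_M(N) > 0$. If (iii) fails, a careful choice of sub-path spanning the two $(+,+)$-edges again produces a matching of strictly positive $w_M$-weight. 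Each case contradicts the max-weight hypothesis, forcing (i)--(iii).

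The main obstacle will be the forward direction: designing a single global labeling $y^\ast$ consistent with every edge of $G$. The conditions (i)--(iii) are phrased in terms of $G_M^+$, but the dual constraints must hold in the whole graph $G$, and edges with labels $(+,-)$ or $(-,+)$ (weight $0$) still impose non-trivial constraints on $y^\ast$. The central technical step is to show that conditions (i)--(iii) force a bipartite-parity structure on the auxiliary graph composed of $M$ and the $(+,+)$-edges, from which a legitimate $\{-1,0,+1\}$-valued labeling summing to zero can be extracted. A secondary subtlety is the sub-path choice in case (iii) of the reverse direction, since an arbitrary truncation of an alternating path might fail to produce a valid matching after the symmetric difference is taken; we will need to select the sub-path so that both endpoints behave correctly with respect to $M$.
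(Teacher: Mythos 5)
Your overall strategy—derive the graph-structural characterization by combining the optimization-based characterization of Bir\'o, Irving, and Manlove with a proof that conditions (i)--(iii) are equivalent to $M$ being a maximum-weight matching, the forward direction via an LP-duality certificate and the reverse direction via alternating paths and cycles—is exactly the route the paper takes in Section~\ref{SECSMI}. However, there is a genuine gap at the very first step: the weight function you use is wrong. You define $w_M(e)$ by identifying $(+,0,-)$ in $(\alpha_M(e),\beta_M(e))$ with $(+1,0,-1)$ and summing, so that $w_M(M)=0$ and popularity would mean ``no $N$ with $w_M(N)>0$.'' This is false, because a vertex that is matched by $M$ but abandoned by $N$ casts a $-1$ vote in $\Delta(N,M)$ that no edge of $N$ records; one gets $w_M(N)=\Delta(N,M)+|\{x:\ \mbox{$x$ matched by $M$, unmatched by $N$}\}|$, not $\Delta(N,M)$. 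Concretely, take $U=\{u_1,u_2\}$, $V=\{v_1,v_2\}$, edges $(u_1,v_1),(u_2,v_1),(u_2,v_2)$, with $v_1\succ_{u_2}v_2$ and $u_2\succ_{v_1}u_1$, and $M=\{(u_1,v_1),(u_2,v_2)\}$. Then $M$ is popular (the only competitor is $N=\{(u_2,v_1)\}$ with $\Delta(N,M)=0$), yet $(u_2,v_1)$ has label $(+,+)$, so your weight gives $w_M(N)=2>0=w_M(M)$. Hence under your weights a popular matching need not be maximum-weight, the dual LP has optimal value at least $2$, and the zero-objective dual certificate you plan to construct in the forward direction cannot exist. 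This is precisely the pitfall the paper flags when it notes that $(+,0,-)$ in $(\alpha_M,\beta_M)$ does not correspond to $(2,1,0)$ in $(\phi_M,\psi_M)$: the correct weights \eqref{EQweightSMI1}--\eqref{EQweightSMI3} give every $M$-matched vertex a baseline of $1$ on each incident edge (so $\phi_M(e)=1$ when $e\in M$ \emph{or} $u$ is unmatched), which yields $w_M(N)-2|M|=\Delta(N,M)$ exactly.

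Two further, smaller points. First, in the forward direction your auxiliary graph ``formed by $M$ together with the $(+,+)$-edges'' is too small: the dual constraints on edges of label $(+,-)$ or $(-,+)$ with both endpoints matched (weight $2$ in the corrected scale) force the labeling to propagate along alternating paths through \emph{all} edges of $G_M^+$, not just the $(+,+)$-edges; the paper accordingly works with the family $\mathcal{P}$ of alternating paths in $G_M^+$ containing a $(+,+)$-edge and partitions their vertices into $\Ueven,\Uodd,\Veven,\Vodd$, obtaining a dual solution in $\{0,1,2\}^{U\cup V}$ with objective $2|M|$ (Corollary~\ref{CORSMI}), with isolated matched vertices receiving value $1$. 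Second, your reverse direction is sound in outline and matches the paper's Section~5.2 (for cycles the two weight conventions happen to agree, since no vertex changes its matched status), but for the path cases the endpoint bookkeeping must be done with the corrected weights, where the end edges of the path have weight at least $1$ rather than possibly $-1$.
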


The optimization-based characterization for the SMI problem  is described in the following way. 
Let $M\subseteq E$ be a matching in $G$ 
and $e=(u,v)\in E$. 
Define the label $(\phi_M(e), \psi_M(e))$ of $e$ as  
\begin{align}
    &{}\phi_M(e)=
    \begin{cases}
        2 & (\mbox{$u$ is matched and $v \succ_u M(u)$}),\\
        1 & (\mbox{$e\in M$ or $u$ is unmatched}),\\
        0 & (\mbox{$u$ is matched and $M(u) \succ_u v$}), 
    \end{cases}
    \label{EQweightSMI1}\\
    &{}\psi_M(e) = 
    \begin{cases}
        2 & (\mbox{$v$ is matched and $u \succ_v M(v)$}),\\
        1 & (\mbox{$e\in M$ or $v$ is unmatched}),\\
        0 & (\mbox{$v$ is matched and $M(v) \succ_v u$}). 
    \end{cases}
    \label{EQweightSMI2}
\end{align}
Then, 
the weight $w_M(e)$ is defined by 
\begin{align}
    &{}w_M(e) = \phi_M(e) + \psi_M(e). 
    \label{EQweightSMI3}
\end{align}
Note that 
$(+,0,-)$ in the label $(\alpha_M(e), \beta_M(e))$
does not exactly correspond to
$(2,1,0)$ in $(\phi_M(e), \psi_M(e))$.

\begin{theorem}[\cite{BIM10}]
    \label{THMoptSMI}
    In an instance $(G=(U,V;E),(\Gamma(u)\cup \{\emptyset\},\succ_u)_{u\in U\cup V})$ of the SMI problem, 
    a matching $M\subseteq E$ in $G$ is popular if and only if it is a maximum-weight matching in $(G,w_M)$ 
    with weight vector $w_M$ defined by \eqref{EQweightSMI1}--\eqref{EQweightSMI3}. 
 \end{theorem}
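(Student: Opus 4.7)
The plan is to establish the single identity
\begin{equation*}
w_M(N) - w_M(M) \;=\; \Delta(N,M) - \Delta(M,N)
\end{equation*}
for every matching $N\subseteq E$, from which the theorem follows at once: $M$ is popular iff the right-hand side is $\le 0$ for every $N$, iff the left-hand side is $\le 0$ for every $N$, iff $M$ is a maximum-weight matching in $(G,w_M)$. Unlike the deeper implications developed later in the paper, this direction needs neither linear-programming duality nor the Dulmage--Mendelsohn decomposition; it is essentially an arithmetic vote-counting argument, and I expect the original proof in \cite{BIM10} to proceed in the same spirit.

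First, since $\phi_M(e)=\psi_M(e)=1$ for every $e\in M$ by \eqref{EQweightSMI1}--\eqref{EQweightSMI2}, we have $w_M(M)=2|M|$. Writing $\mathrm{vote}(x)\in\{-1,0,+1\}$ for the vote of $x\in U\cup V$ between $M$ and $N$ (with an unmatched vertex preferring any neighbour to the empty set), the vote margin splits cleanly as $\Delta(N,M)-\Delta(M,N)=\sum_{u\in U}\mathrm{vote}(u)+\sum_{v\in V}\mathrm{vote}(v)$. The heart of the proof is then the per-side identity
\begin{equation*}
\sum_{u\in U}\mathrm{vote}(u) \;=\; \sum_{(u,v)\in N}\bigl(\phi_M(u,v)-1\bigr) + \bigl(|N|-|M|\bigr),
\end{equation*}
which I would verify by going through the possible statuses of a vertex $u\in U$ in the pair $(M,N)$. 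When $u$ is matched by both, the value $\phi_M(u,N(u))-1$ reproduces $\mathrm{vote}(u)$ directly from the definition of $\phi_M$. The only discrepancy comes from vertices that switch between matched and unmatched when passing from $M$ to $N$, and these aggregate to the correction term $|N|-|M|$ via the elementary identity $|U_N\setminus U_M|-|U_M\setminus U_N|=|N|-|M|$, where $U_M,U_N\subseteq U$ denote the vertices of $U$ matched by $M$ and $N$. The symmetric statement for $V$ follows by an identical argument.

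Adding the $U$-side and $V$-side identities,
\begin{equation*}
\Delta(N,M)-\Delta(M,N) \;=\; \sum_{(u,v)\in N}\bigl(w_M(u,v)-2\bigr) + 2(|N|-|M|) \;=\; w_M(N) - 2|M| \;=\; w_M(N)-w_M(M),
\end{equation*}
which is exactly the identity we sought. The main subtle point to watch in the case analysis is that $\phi_M$ assigns the value $1$ to two semantically distinct situations, namely $(u,v)\in M$ (which carries vote $0$) and $u$ unmatched in $M$ (which carries vote $+1$); keeping this distinction straight, and verifying that it is the sole source of the $|N|-|M|$ correction on each side, is really the only nontrivial bookkeeping in the proof.
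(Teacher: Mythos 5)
Your argument is correct, but note that the paper does not actually prove Theorem~\ref{THMoptSMI}: it is quoted from \cite{BIM10} and then used as one endpoint of the equivalence established in Section~\ref{SECSMI}, where the authors deliberately avoid the notion of popularity and instead connect the maximum-weight condition directly to the graph-structural characterization of Theorem~\ref{THMgraphSMI} (LP duality and an explicit $\{0,1,2\}$-valued dual solution in one direction, augmenting alternating paths/cycles in the other). What you supply is the missing direct proof of the cited theorem itself, via the vote-counting identity, and it is the standard argument: your per-side identity is right, the aggregation of the matched/unmatched discrepancies into the correction $|N|-|M|$ is right, and you correctly isolate the one genuine subtlety, namely that $\phi_M(e)=1$ covers both $e\in M$ (vote $0$) and $u$ unmatched in $M$ (vote $+1$) --- exactly the mismatch between $(\alpha_M,\beta_M)$ and $(\phi_M,\psi_M)$ that the paper remarks on after \eqref{EQweightSMI3}. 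One small notational wrinkle: with this paper's convention, $\Delta(M,N)$ is already a signed margin, so $\Delta(N,M)-\Delta(M,N)=2\,\Delta(N,M)$ equals \emph{twice} the sum of votes; the clean form of your identity in the paper's notation is $w_M(N)-w_M(M)=\Delta(N,M)$. Since only the sign matters for both popularity and maximality, this factor of two is harmless, but you should state the identity consistently with whichever definition of $\Delta$ you adopt.
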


\section{Equivalence of the characterizations for the HA problem}
\label{SECHA}

In this section, 
we prove that the two characterizations in Theorems \ref{THMgraphHA} and \ref{THMopt} are equivalent 
without using the fact that $M$ is popular. 

\subsection{Deriving the optimization-based characterization for the HA problem}

Let $(G=(A,H;E),(\Gamma(a),\succ_a)_{a\in A})$ be an instance of the HA problem. 
Suppose that a matching $M\subseteq E$ in $G$ satisfies 
the graph-structural characterization, 
i.e.,\ 
Properties \ref{ENUgraphHA1} and \ref{ENUgraphHA2} in 
Theorem \ref{THMgraphHA}. 
We prove that $M$ is a maximum-weight $A$-perfect matching in the weighted graph $(G,w_M)$, 
where the edge-weight vector $w_M\in \{0,1,2\}^E$ is defined by \eqref{EQweightHA}.

The following linear program with variable $x\in \RR^E$ is a linear relaxation of the 
maximum-weight $A$-perfect matching problem in the weighted graph $(G,w_M)$: 
\begin{alignat}{3}
\label{EQLPobj}
&\mbox{Maximize}    &\quad& \sum_{(e)\in E} w_M(e)\cdot x(e)     && \\
&\mbox{subject to}  &\quad& \sum_{h\in \Gamma(a)}x(a,h)=1                &\quad& (a\in A),
\label{EQLPconst1} \\
&                   &\quad& \sum_{a\in \Gamma(h)}x(a,h)\le 1             &\quad& (h\in H),
\label{EQLPconst2}\\
&                   &\quad&x(e)\ge 0                                        &\quad& (e\in E).
\label{EQLPconst3}
\end{alignat}
The following linear program with variable $y\in \RR^{A\cup H} $ is the dual problem of \eqref{EQLPobj}--\eqref{EQLPconst3}: 
\begin{alignat}{3}
\label{EQdualobj}
&\mbox{Minimize}    &\quad& \sum_{a\in A} y(a) + \sum_{h\in H} y(h)     && \\
\label{EQdualconst1}
&\mbox{subject to}  &\quad& y(a) + y(h) \ge w_M(e)                    &\quad& (e=(a,h)\in E),\\
\label{EQdualconst2}
&                   &\quad& y(h)\ge 0                                   &\quad& (h\in H).
\end{alignat}

The following lemma is straightforward. 
\begin{lemma}
\label{LEMprimeHA}
    If $M$ is an $A$-perfect matching, 
    then 
    the characteristic vector $\chi_M$ of $M$ is a feasible solution of the linear program \eqref{EQLPobj}--\eqref{EQLPconst3} with objective value $|A|$. 
\end{lemma}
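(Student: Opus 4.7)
The plan is to verify the lemma by direct bookkeeping: check each of the three constraints \eqref{EQLPconst1}--\eqref{EQLPconst3} for $x = \chi_M$ using only the fact that $M$ is an $A$-perfect matching, and then compute $\sum_{e\in E} w_M(e)\chi_M(e)$ from the definition \eqref{EQweightHA} of $w_M$.

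For feasibility I would argue in three short steps. The $A$-perfectness of $M$ means that each $a\in A$ is incident to exactly one edge of $M$, so $\sum_{h\in\Gamma(a)} \chi_M(a,h)=1$ for every $a\in A$, giving \eqref{EQLPconst1}. Because $M$ is a matching, each $h\in H$ is incident to at most one edge of $M$, so $\sum_{a\in\Gamma(h)} \chi_M(a,h)\le 1$ for every $h\in H$, giving \eqref{EQLPconst2}. Non-negativity \eqref{EQLPconst3} is immediate since $\chi_M\in\{0,1\}^E$.

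To evaluate the objective, note that $\chi_M$ is supported on $M$, so $\sum_{e\in E} w_M(e)\chi_M(e)=\sum_{e\in M} w_M(e)$. Every edge $e=(a,M(a))\in M$ satisfies $h=M(a)$, whence the middle case of \eqref{EQweightHA} gives $w_M(e)=1$. Summing over the $|A|$ edges of $M$ (using $A$-perfectness once more) yields $|A|$, as required.

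The main ``obstacle'' is really just a sanity check on the conventions: one must observe that the middle case of \eqref{EQweightHA} is triggered along every edge of $M$, and that the equality in \eqref{EQLPconst1} is exactly the definition of $A$-perfectness. No combinatorial or optimization machinery is needed here; the lemma's role is to supply a primal feasible point of value $|A|$, which will later be paired with a dual feasible $y^\ast$ constructed from the graph-structural characterization to invoke weak duality and conclude optimality of $M$ in $(G,w_M)$.
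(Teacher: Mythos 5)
Your proof is correct and is exactly the direct verification the paper has in mind; the paper simply declares this lemma ``straightforward'' and omits the details you supply (each case of \eqref{EQLPconst1}--\eqref{EQLPconst3} from $A$-perfectness and the matching property, and $w_M(a,M(a))=1$ summed over the $|A|$ edges of $M$).
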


Below we prove that $\chi_M$ is an optimal solution for \eqref{EQLPobj}--\eqref{EQLPconst3} by 
constructing a feasible solution of the dual problem \eqref{EQdualobj}--\eqref{EQdualconst2} with objective value $|A|$. 

It follows from Property \ref{ENUgraphHA2} of Theorem \ref{THMgraphHA} that 
the set $A$ of applicants is partitioned into two sets $A_f$ and $A_s$, 
where 
\begin{align*}
    A_f= \{ a\in A\colon M(a) = f(a) \}, \quad 
    A_s= \{ a\in A\colon M(a) = s(a) \}. 
\end{align*}
On the basis of this partition, 
define $y^* \in \{0,1\}^{A\cup H}$ by
\begin{align}
    \label{EQyHA}
    y^*(a) = 
    \begin{cases}
        0 & (a\in A_f), \\
        1 & (a\in A_s),
    \end{cases}
    \quad 
    y^*(h) = 
    \begin{cases}
        1 & (h\in H_f), \\
        0 & (h\in H\setminus H_f).
    \end{cases}
\end{align}

\begin{lemma}
\label{LEMdualHA}
    If a matching $M\subseteq E$ in $G$ satisfies Properties \ref{ENUgraphHA1} and \ref{ENUgraphHA2} in Theorem \ref{THMgraphHA}, 
    then the vector $y^*$ defined by \eqref{EQyHA} is a feasible solution of the dual problem \eqref{EQdualobj}--\eqref{EQdualconst2} with objective value $|A|$. 
\end{lemma}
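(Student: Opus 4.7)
The plan is to verify directly, by a short case analysis, that $y^*$ satisfies each dual constraint and attains objective value $|A|$. Throughout I will use only the two defining properties of $M$ from Theorem \ref{THMgraphHA}, the definitions of $f(a)$, $s(a)$, $H_f$, $A_f$, $A_s$, and the weight rule \eqref{EQweightHA}.

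First I would handle the objective value. Since $y^*(a)=1$ exactly when $a\in A_s$ and $y^*(h)=1$ exactly when $h\in H_f$, the dual objective equals $|A_s|+|H_f|$. By Property \ref{ENUgraphHA1} each $h\in H_f$ is matched by $M$, and by Property \ref{ENUgraphHA2} its partner $a=M(h)$ necessarily satisfies $M(a)=f(a)=h$, hence $a\in A_f$. Conversely every $a\in A_f$ is matched to $f(a)\in H_f$. This establishes a bijection between $A_f$ and $H_f$, so $|H_f|=|A_f|$, and thus $|A_s|+|H_f|=|A_s|+|A_f|=|A|$.

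Next I would check feasibility of \eqref{EQdualconst1} for an arbitrary edge $e=(a,h)\in E$ by splitting on the value of $w_M(e)\in\{0,1,2\}$. The case $w_M(e)=0$ is trivial because both $y^*(a)$ and $y^*(h)$ lie in $\{0,1\}$. For $w_M(e)=1$, i.e.\ $h=M(a)$, either $a\in A_s$, in which case $y^*(a)=1$ suffices, or $a\in A_f$, in which case $h=M(a)=f(a)\in H_f$ gives $y^*(h)=1$. For $w_M(e)=2$, i.e.\ $h\succ_a M(a)$, the possibility $a\in A_f$ is excluded since $f(a)$ is the top-ranked house of $a$; thus $a\in A_s$ and $M(a)=s(a)$. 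The key observation is that any $h$ strictly preferred to $s(a)$ must belong to $H_f$, directly from the definition of $s(a)$ as the most preferred house in $\Gamma(a)\setminus H_f$. Hence $y^*(a)+y^*(h)=1+1=2=w_M(e)$. The non-negativity constraint \eqref{EQdualconst2} is immediate because $y^*(h)\in\{0,1\}$.

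The only nontrivial step is the case $w_M(e)=2$, which hinges on the two-tier structure of the preferences implicit in Property \ref{ENUgraphHA2}: the definition of $s(a)$ is precisely what forces any edge of weight $2$ incident to $a\in A_s$ to land in $H_f$, where the dual variable covers the remaining weight. I expect this to be the main (but still very short) obstacle; the remaining cases and the objective-value computation are routine.
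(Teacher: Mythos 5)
Your proof is correct and follows essentially the same route as the paper's: a direct case analysis verifying \eqref{EQdualconst1} (yours organized by the value of $w_M(e)$, the paper's by whether $a\in A_f$ or $a\in A_s$, but resting on the identical key facts that houses preferred to $s(a)$ lie in $H_f$ and that $M$ induces a bijection between $A_f$ and $H_f$), plus the same objective-value count $|A_s|+|H_f|=|A_s|+|A_f|=|A|$. No gaps.
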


\begin{proof}
    We first prove the feasibility of $y^*$ defined by \eqref{EQyHA}. 
    The constraint \eqref{EQdualconst2} is clearly satisfied. 
    Below we show that \eqref{EQdualconst1} is satisfied for each edge $e=(a,h)\in E$, 
    where $a\in A$ and $h\in H$. 

    Suppose that $a\in A_f$, 
    i.e.,\ 
    $M(a)=f(a)$. 
    Note that $y^*(a)=0$. 
    If $h=M(a)$, 
    we have that 
    $w_M(e)=1$ and $h\in H_f$, 
    implying that $y^*(h)=1$. 
    If $h\in \Gamma(a)  \setminus \{ M(a)\}$, 
    it follows from the definition of $f(a)$ that $w_M(e)=0$. 
    Hence \eqref{EQdualconst1} is satisfied in both cases. 

    Next suppose that $a\in A_s$, 
    i.e.,\ 
    $M(a)=s(a)$. 
    It follows that $y^*(a)=1$. 
    We have the following three cases, 
    in each of which \eqref{EQdualconst1} is satisfied. 
        \paragraph{Case 1: $h=s(a)$} We have that $w_M(e)=1$. 
        It follows from the definition of $s(a)$ that $h\in H \setminus H_f$, and hence from \eqref{EQyHA} that $y^*(h)=0$. 
        Thus, 
        \eqref{EQdualconst1} holds with equality. 
        \paragraph{Case 2: $h\in H_f$} It follows from \eqref{EQyHA} that $y^*(h)=1$. 
        We thus derive \eqref{EQdualconst1} from $w_M(e)\le 2$. 
        \paragraph{Case 3: $h\in H\setminus (H_f \cup \{s(a)\})$}
        It follows from the definition of $H_f$ and $s(a)$ that $s(a)\succ_a h$, 
        and hence $w_M(e)=0$. 
        Thus, \eqref{EQdualconst1} is satisfied regardless of the value of $y^*(h)$, 
        while we know that $y^*(h)=0$. 

    We have shown that $y^*$ is a feasible solution of \eqref{EQdualobj}--\eqref{EQdualconst2}. 
    It follows from \eqref{EQyHA} that its objective value of $y^*$ is equal to 
    \begin{align}
    \sum_{a\in A} y^*(a) + \sum_{h\in H} y^*(h) = |A_s| + |H_f|. 
    \label{EQobjHA}
    \end{align}
    We complete the proof by showing that $|H_f| = |A_f|$. 
    It follows from the definitions of $A_f$ and $H_f$ that each applicant in $A_f$ is matched with a house in $H_f$ by $M$, 
    implying that $|H_f|\ge |A_f|$. 
    Suppose to the  contrary that there exists a house $h\in H_f$ with which no applicant in $A_f$ is matched. 
    It then follows from Property \ref{ENUgraphHA1} in Theorem \ref{THMgraphHA} that $M(h)\in A_s$. 
    This contradicts the fact that $h\in H_f$ and the definition of $A_s$. 
    We thus obtain $|H_f| = |A_f|$, 
    completing the proof. 
\end{proof}

On the basis of the strong duality theorem for linear optimization, 
we obtain from Lemmas \ref{LEMprimeHA} and \ref{LEMdualHA} that 
$\chi_M$ is an optimal solution of \eqref{EQLPobj}--\eqref{EQLPconst3}, 
implying that $M$ is a maximum-weight $A$-perfect matching in the weighted graph $(G,w_M)$.

The above proof implies a new connection of the structure of popular matchings to the theory of combinatorial optimization. 
We conclude this subsection with the following corollaries. 

\begin{corollary}
    \label{CORwcoverHA}
    Let $(G=(A,H;E),(\Gamma(a),\succ_a)_{a\in A})$ be an instance of the HA problem. 
    In the weighted graph $(G,w_M)$ 
    in which $w_M$ is defined by \eqref{EQweightHA}, 
    the vector $y^*$ defined by \eqref{EQyHA} is a minimum $w_M$-vertex cover. 
\end{corollary}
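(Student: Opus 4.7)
The plan is to derive the corollary directly from Lemmas \ref{LEMprimeHA} and \ref{LEMdualHA} combined with weak duality for the LP pair from the preliminaries, so almost no new work is needed.

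First I would verify that $y^*$ is a $w_M$-vertex cover in the sense of the preliminaries, i.e., that it satisfies \eqref{EQdualmm_const1}--\eqref{EQdualmm_const3} with $(U,V) = (A,H)$ and $w = w_M$. The edge constraints \eqref{EQdualmm_const1} and the non-negativity of $y^*(h)$ for $h \in H$ were already established in the proof of Lemma \ref{LEMdualHA} (where they appeared as \eqref{EQdualconst1} and \eqref{EQdualconst2}). What remains is the non-negativity of $y^*(a)$ for $a \in A$, which is not demanded by the HA-specific dual \eqref{EQdualobj}--\eqref{EQdualconst2} but is demanded by the preliminaries' dual. This condition is immediate from the defining formula \eqref{EQyHA}, since $y^*(a) \in \{0,1\}$.

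Next I would invoke weak duality for the LP pair \eqref{EQLPmm_obj}--\eqref{EQLPmm_const3} and \eqref{EQdualmm_obj}--\eqref{EQdualmm_const3}. The characteristic vector $\chi_M$ is feasible for the primal LP \eqref{EQLPmm_obj}--\eqref{EQLPmm_const3}: since $M$ is a matching, each degree sum is at most $1$, so the inequality form of the degree constraint \eqref{EQLPmm_const1} is automatically satisfied. The primal objective value of $\chi_M$ equals $|A|$ because $w_M(a, M(a)) = 1$ for every $a \in A$ by definition \eqref{EQweightHA}. Weak duality then asserts that every feasible solution of \eqref{EQdualmm_obj}--\eqref{EQdualmm_const3}, i.e., every $w_M$-vertex cover, has objective value at least $|A|$. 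Combined with Lemma \ref{LEMdualHA}, which states that $y^*$ is feasible with objective value exactly $|A|$, this shows that $y^*$ attains the minimum and hence is a minimum $w_M$-vertex cover.

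I do not anticipate any genuine obstacle; the corollary is essentially a reformulation of Lemma \ref{LEMdualHA} in the language of the preliminaries' LP pair. The only subtlety is that the two dual LPs differ slightly (the HA-specific dual omits the non-negativity on $A$ since the primal constraint on applicants is an equality), so one must check both that $y^*$ lies in the smaller feasible region of the preliminaries' dual and that the minimum there coincides with the value $|A|$ already witnessed by $y^*$; both points are handled as above.
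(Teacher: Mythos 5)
Your proof is correct and follows essentially the same route as the paper, which presents the corollary as an immediate consequence of Lemmas \ref{LEMprimeHA} and \ref{LEMdualHA} via LP duality. Your explicit check that $y^*$ also satisfies the extra non-negativity constraint $y^*(a)\ge 0$ required by the preliminaries' dual \eqref{EQdualmm_obj}--\eqref{EQdualmm_const3} (which the HA-specific dual \eqref{EQdualobj}--\eqref{EQdualconst2} omits) is a detail the paper leaves implicit, and it is handled correctly.
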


Namely, 
if a matching $M$ satisfies Properties \ref{ENUgraphHA1} and \ref{ENUgraphHA2} in Theorem \ref{THMgraphHA} 
(i.e.,\ if $M$ is a popular matching), 
then 
a minimum $w_M$-vertex cover is obtained from $A_s$ and $H_f$, 
which are defined from the preferences of the applicants 
without reference to either the linear program \eqref{EQLPobj}--\eqref{EQLPconst3} or \eqref{EQdualobj}--\eqref{EQdualconst2}.
It is also noteworthy that a minimum $w_M$-vertex cover is obtained as a $\{0,1\}$-vector, 
which does not directly follow from Lemma \ref{LEMintmm} since 
$w_M$ is a $\{0,1,2\}$-vector. 

\begin{corollary}
    \label{CORHA}
    The dual problem \eqref{EQdualobj}--\eqref{EQdualconst2} in which $w_M$ is defined by \eqref{EQweightHA} has a $\{0,1\}$-optimal solution. 
\end{corollary}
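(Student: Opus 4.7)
The plan is to derive Corollary \ref{CORHA} as an immediate consequence of Lemma \ref{LEMdualHA} together with the strong duality theorem of linear programming. My key observation is that the vector $y^*$ defined in \eqref{EQyHA} is already $\{0,1\}$-valued by construction, so once I certify that it is an optimal solution of the dual problem \eqref{EQdualobj}--\eqref{EQdualconst2}, the existence of a $\{0,1\}$-optimal solution is immediate. The statement is understood in the context of the preceding discussion, so $M$ is a matching satisfying Properties \ref{ENUgraphHA1} and \ref{ENUgraphHA2} of Theorem \ref{THMgraphHA}.

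I would proceed in two steps. First, I would invoke Lemma \ref{LEMdualHA} to obtain that $y^*$ is a feasible solution of the dual problem with objective value $|A|$. Second, I would invoke Lemma \ref{LEMprimeHA} to obtain that $\chi_M$ is a feasible solution of the primal problem \eqref{EQLPobj}--\eqref{EQLPconst3} with the same objective value $|A|$. The strong duality theorem then forces both $\chi_M$ and $y^*$ to be optimal for their respective problems, and the conclusion follows by reading off that $y^* \in \{0,1\}^{A \cup H}$.

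There is no substantive technical obstacle here; essentially all of the work has already been absorbed into Lemma \ref{LEMdualHA}. The value of Corollary \ref{CORHA} is conceptual: it contrasts with Lemma \ref{LEMintmm} (Egerv\'ary's theorem), which only guarantees an \emph{integer} dual optimal solution whose entries could a priori be as large as $2$ because $w_M$ takes values in $\{0,1,2\}$. The corollary strengthens this to a $\{0,1\}$-valued optimum, with the combinatorially defined $y^*$ of \eqref{EQyHA} serving as the explicit witness, and this witness is built directly from the preferences rather than from any LP-based procedure.
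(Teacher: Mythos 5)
Your proposal is correct and follows essentially the same route as the paper: the corollary is obtained by combining Lemma \ref{LEMprimeHA} and Lemma \ref{LEMdualHA} with strong duality to certify that the $\{0,1\}$-vector $y^*$ of \eqref{EQyHA} is a dual optimal solution (this is exactly Corollary \ref{CORwcoverHA}), and your remark that this strengthens the mere integrality guaranteed by Lemma \ref{LEMintmm} matches the paper's own discussion.
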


\subsection{Deriving the graph-structural characterization for the HA problem}
\label{SECHA2}

In this subsection, 
we present the converse derivation. 
Let 
\begin{align*}
(G=(A,H;E),(\Gamma(a),\succ_a)_{a\in A})    
\end{align*} 
be an instance of the HA problem, 
and 
suppose that a matching $M$ in $G$ is a maximum-weight $A$-perfect matching in the weighted graph $(G,w_M)$, 
where $w_M$ is defined by \eqref{EQweightHA}. 
We prove that $M$ satisfies Properties \ref{ENUgraphHA1} and \ref{ENUgraphHA2} in Theorem \ref{THMgraphHA}. 

We begin with stating the basic facts in Section \ref{SECbasics} in terms of minimum-weight $A$-perfect matchings. 
In the same way as for Lemma \ref{LEMintmm}, 
it is not difficult to derive the integrality of the linear programs \eqref{EQLPobj}--\eqref{EQLPconst3} 
and \eqref{EQdualobj}--\eqref{EQdualconst2} from the total unimodularity of 
the coefficient matrix 
and the fact that $w_M$ is an integral vector. 
\begin{lemma}
\label{LEMint}
The linear programs \eqref{EQLPobj}--\eqref{EQLPconst3} and \eqref{EQdualobj}--\eqref{EQdualconst2} have an integral optimal solution. 
\end{lemma}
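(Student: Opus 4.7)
The plan is to reduce \textbf{Lemma \ref{LEMint}} to the already stated \textbf{Lemma \ref{LEMintmm}} (Egerv\'ary's theorem), using two small bridging observations to handle the mismatch between the two linear programs: the presence of an equality constraint in \eqref{EQLPconst1}, and the restriction to $A$-perfect matchings on the primal side.

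First I would argue integrality of the primal \eqref{EQLPobj}--\eqref{EQLPconst3}. The key point is that, thanks to the last resort $l(a)$ attached to each applicant $a \in A$, with $l(a)$ adjacent only to $a$ and $w_M(a,l(a))=1 \ge 0$, any maximum-weight (not necessarily $A$-perfect) matching in $(G,w_M)$ can be extended to an $A$-perfect matching of the same or larger weight by matching every unmatched applicant with its last resort. Consequently the linear program \eqref{EQLPmm_obj}--\eqref{EQLPmm_const3} on $(G,w_M)$ and \eqref{EQLPobj}--\eqref{EQLPconst3} share the same optimal value, and any integral optimal solution of the former produced by Lemma \ref{LEMintmm} can be adjusted on last-resort edges to yield an integral optimal solution of the latter.

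Next I would establish integrality of the dual \eqref{EQdualobj}--\eqref{EQdualconst2}. Here I would invoke total unimodularity directly: the coefficient matrix of the system \eqref{EQLPconst1}--\eqref{EQLPconst2} is the incidence matrix of the bipartite graph $G$, and splitting the equality \eqref{EQLPconst1} into two inequalities preserves total unimodularity (the split rows are $\pm$ copies of rows of a TU matrix). Because $w_M \in \{0,1,2\}^E$ is integral, standard LP theory for totally unimodular systems guarantees that the dual attains its optimum at an integral vertex. The only apparent subtlety is that the dual variables $y(a)$ for $a\in A$ are unrestricted in sign (since \eqref{EQLPconst1} is an equality); this does not interfere with the TU-based integrality argument, as the same argument shows each vertex of the dual polyhedron is integral.

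The main obstacle, such as it is, lies in book-keeping for the mismatch between \eqref{EQLPobj}--\eqref{EQLPconst3} and the template \eqref{EQLPmm_obj}--\eqref{EQLPmm_const3} covered by Lemma \ref{LEMintmm}; once the last-resort reduction and the TU-preserving splitting of the equality constraint are in place, the statement follows immediately from Lemma \ref{LEMintmm} and the standard theory of totally unimodular systems \cite[Theorem 17.1 and Section 18.3]{Sch03}.
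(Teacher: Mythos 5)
Your proposal is correct and follows essentially the same route as the paper, whose proof is the one\-/line observation that both integrality claims follow from the total unimodularity of the bipartite incidence matrix (and its transpose) together with the integrality of $w_M$; your handling of the equality constraint \eqref{EQLPconst1} and of the resulting free dual variables $y(a)$ is exactly the intended argument. The extra detour for the primal via last resorts and Lemma \ref{LEMintmm} is valid but unnecessary---total unimodularity already gives integral vertices of the polyhedron \eqref{EQLPconst1}--\eqref{EQLPconst3} directly---and note in passing that $w_M(a,l(a))$ equals $0$ rather than $1$ whenever $M(a)\neq l(a)$, though only its nonnegativity is needed for your reduction.
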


It follows from Lemma \ref{LEMint} and the fact that $M$ is a maximum-weight $A$-perfect matching that 
the characteristic vector $\chi_M$ of $M$ is an optimal solution of the linear program \eqref{EQLPobj}--\eqref{EQLPconst3}. 
Again on the basis of Lemma \ref{LEMint}, 
let $y^*\in \ZZ^{A\cup H}$ denote an integral optimal solution of 
the dual problem \eqref{EQdualobj}--\eqref{EQdualconst2}. 

Note that $\chi_M(e) > 0$ implies $e\in M$, and hence $w_M(e)=1$. 
Now 
the complementary slackness condition of these linear programs is that 
\begin{alignat}{2}
    \label{EQCS1}
    &{}\mbox{$\chi_M(e) > 0$ implies $y^*(a)+y^*(h)= w_M(e)=1$}& \quad &(e=(a,h)\in E),\\
    \label{EQCS2}
    &{}\mbox{$y^*(h) > 0$ implies $\displaystyle\sum_{a\in \Gamma(h)}\chi_M(a,h)= 1$}& \quad& (h\in H).
\end{alignat}

We begin an argument specific to our problem with the following lemma, 
stating that an integral optimal solution $y^*$ of \eqref{EQdualobj}--\eqref{EQdualconst2} is a $\{0,1\}$-vector. 
Note that this fact is not trivial because 
$w_M(e)$ may be equal to two for $e\in E$ and 
$y^*(a)$ may be negative for $a\in A$. 

\begin{lemma}
\label{LEMy01}
    For an integral optimal solution $y^* \in \ZZ^{A\cup H}$ of \eqref{EQdualobj}--\eqref{EQdualconst2}, 
    it holds that $y^* \in \{0,1\}^{A\cup H}$. 
\end{lemma}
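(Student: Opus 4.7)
The plan is to combine complementary slackness with the last-resort gadget to force every coordinate of $y^*$ into $\{0,1\}$. From \eqref{EQCS1} applied to each matched edge $(a,M(a))$ (which lies in $M$ and hence has $w_M=1$) I read off $y^*(a)+y^*(M(a))=1$ for every $a\in A$; from \eqref{EQCS2} together with \eqref{EQdualconst2} I get $y^*(h)=0$ for every $h\in H$ that is unmatched by $M$. These two facts reduce the lemma to proving $0\le y^*(a)\le 1$ for every $a\in A$, since then $y^*(M(a))=1-y^*(a)\in\{0,1\}$ follows automatically, and the upper bound $y^*(a)\le 1$ is immediate from $y^*(M(a))\ge 0$. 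So the substantive task is the lower bound $y^*(a)\ge 0$.

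For the lower bound I would split on whether $M(a)$ equals the last resort $l(a)$. In the easy case $M(a)\ne l(a)$, the vertex $l(a)$ is adjacent only to $a$ in $G$ and is therefore unmatched by $M$, so $y^*(l(a))=0$. The dual constraint on the edge $(a,l(a))$ then gives $y^*(a)\ge w_M(a,l(a))=0$, since $M(a)\succ_a l(a)$ forces $w_M(a,l(a))=0$.

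The delicate case is $M(a)=l(a)$, where $l(a)$ is matched and $y^*(l(a))$ may be strictly positive, so the above shortcut fails. I would argue by contradiction: assuming $y^*(a)\le -1$, pick any real house $h\in\Gamma(a)\setminus\{l(a)\}$, which exists because the original graph has no isolated vertex. Since $l(a)$ is least preferred, $h\succ_a M(a)$, hence $w_M(a,h)=2$, and dual feasibility forces $y^*(h)\ge 2-y^*(a)\ge 3>0$. By \eqref{EQCS2} the house $h$ is then matched to some $a'\in A$, and because $h$ is a real house we have $M(a')=h\ne l(a')$. The easy case applied to $a'$ yields $y^*(a')\ge 0$, whereas \eqref{EQCS1} gives $y^*(a')=1-y^*(h)\le -2$, a contradiction. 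Hence $y^*(a)\ge 0$ in this case too, and the lemma follows.

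The main obstacle, to my eye, is recognising that the seemingly cosmetic last-resort device is precisely what pins the integer dual into $\{0,1\}$. Without the zero-weight edge $(a,l(a))$ acting as a floor for $y^*(a)$, the raw weights $w_M\in\{0,1,2\}$ would readily admit integer optimal duals with coordinates outside $\{0,1\}$, so any proof must exploit this gadget somewhere, and the two-step reduction via a neighbour $a'$ with $M(a')\ne l(a')$ is essentially forced by this structural asymmetry.
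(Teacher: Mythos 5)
Your proof is correct and follows essentially the same route as the paper's: the upper bound $y^*(a)\le 1$ from complementary slackness on $(a,M(a))$, the lower bound via the zero-weight edge to the unmatched last resort when $M(a)\ne l(a)$, and, when $M(a)=l(a)$, a two-step argument that passes to the matched partner $a'$ of a strictly preferred real house $h$ and applies the easy case to $a'$. The only cosmetic difference is that you phrase the hard case as a contradiction from $y^*(a)\le -1$, whereas the paper argues directly that $y^*(h')=1$ and hence $y^*(a)=1$; the key ideas, including the observation that the last-resort gadget is what pins the dual to $\{0,1\}$, coincide.
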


\begin{proof}
    We first show that $y^*(a)\in \{0,1\}$ for each applicant $a\in A$. 
    Let $a\in A$ and $h=M(a)$. 
    It follows from $(a,h)\in M$ and \eqref{EQCS1} that $y^*(a)+y^*(h)=1$. 
    From \eqref{EQdualconst2}, we obtain that 
    \begin{align}
        \label{EQyale1}
        y^*(a)\leq 1. 
    \end{align} 

    Suppose that $h\neq l(a)$. 
    We have that $(a,l(a)) \not\in M$ and $l(a)$ is only adjacent to $a$ in $G$, 
    and hence 
    \begin{align*}
        \sum_{a'\in \Gamma(l(a))}\chi_M(a',l(a))= 0. 
    \end{align*}
    It then follows from \eqref{EQCS2} that 
    \begin{align}
    \label{EQyla}
        y^*(l(a))=0. 
    \end{align}
    We also have that 
    $h \succ_a l(a)$, 
    implying that $w_M(a,l(a))=0$. 
    It then follows from \eqref{EQdualconst1} that 
    \begin{align}
        \label{EQ3.11}
        y^*(a) + y^*(l(a)) \ge 0.
    \end{align}
    From \eqref{EQyla} and \eqref{EQ3.11}, 
    we obtain $y^*(a)\ge 0$. 
    We thus derive from \eqref{EQyale1} 
    that 
    $y^*(a)\in \{0,1\}$, 
    since $y^*$ is integral. 
    From \eqref{EQCS1}, 
    we also obtain $y^*(M(a)) \in \{0,1\}$. 

    Suppose that $h=l(a)$. 
    Let $h'\in \Gamma(a) \setminus \{l(a)\}$, 
    which must exist since $a$ is not isolated before $l(a)$ is attached. 
    We have that $h' \succ_a l(a)=M(a)$, and hence $w_M(a,h')=2$. 
    It follows from \eqref{EQdualconst1} that 
    \begin{align}
        \label{EQ3.14}
        y^*(a) + y^*(h')\ge 2, 
    \end{align}
    and further 
    \begin{align}
        \label{EQyh}
        y^*(h')\ge 1
    \end{align} from \eqref{EQyale1}. 
    It then follows from \eqref{EQCS2} that $\sum_{a\in \Gamma(h')}\chi_M(a,h')= 1$, 
    implying that $(a',h')\in M$ for some applicant $a'\in A$. 
    We can now apply the above argument to $a'$ to obtain $y^*(h')=y^*(M(a'))\in \{0,1\}$, 
    and hence $y^*(h')= 1$ from \eqref{EQyh}. 
    We thus obtain $y^*(a)= 1$ from \eqref{EQyale1} and \eqref{EQ3.14}. 

    We have shown that $y^*(a)\in \{0,1\}$ for each $a\in A$. 
    We complete the proof by showing that $y^*(h)\in \{0,1\}$ for each $h\in H$. 
    Let $h\in H$. 
    If $(a,h)\in M$ for some $a\in A$, 
    it follows from \eqref{EQCS1} that $y^*(h)=1-y^*(a)\in \{0,1\}$. 
    Otherwise, 
    $\sum_{a\in \Gamma(h)}\chi_M(a,h)= 0$, 
    and 
    we obtain $y^*(h)=0$ from \eqref{EQCS2}. 
\end{proof}

On the basis of Lemma \ref{LEMy01}, 
we partition the set $A$ of applicants into two sets $A_0$ and $A_1$ defined by 
\begin{align*}
    A_0 = \{a\in A \colon y^*(a) =0\}, \quad A_1 = \{a\in A \colon y^*(a) =1\}.
\end{align*}

\begin{lemma}
    \label{LEMA0HA}
    For each $a\in A_0$, 
    it holds that $M(a)=f(a)$. 
\end{lemma}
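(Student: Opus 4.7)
The plan is to argue by contradiction using dual feasibility \eqref{EQdualconst1} together with the $\{0,1\}$-integrality of $y^*$ established in Lemma \ref{LEMy01}.

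First, I would fix an applicant $a \in A_0$, so that $y^*(a)=0$, and recall that $f(a) \in \Gamma(a)$ is well-defined (indeed, every $a$ has at least the last resort $l(a)$ available, and by definition $f(a)$ is the strictly most preferred house in $\Gamma(a)$). The strategy is then to suppose, for contradiction, that $M(a) \neq f(a)$. Since $f(a)$ is strictly preferred to every other house in $\Gamma(a)$, we would have $f(a) \succ_a M(a)$, and so by the definition \eqref{EQweightHA} of $w_M$,
\begin{equation*}
w_M(a,f(a)) = 2.
\end{equation*}

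Next I would invoke dual feasibility of $y^*$ on the edge $(a,f(a))\in E$ given by \eqref{EQdualconst1}, namely
\begin{equation*}
y^*(a)+y^*(f(a)) \ge w_M(a,f(a)) = 2.
\end{equation*}
Combined with $y^*(a)=0$, this forces $y^*(f(a)) \ge 2$, which directly contradicts Lemma \ref{LEMy01} that $y^*\in\{0,1\}^{A\cup H}$. Hence $M(a)=f(a)$, proving the lemma.

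I do not anticipate a real obstacle here: the argument is a short application of complementary slackness and dual feasibility, once Lemma \ref{LEMy01} is in hand. The only subtlety worth flagging in the write-up is that the contradiction uses the strict preference structure (no ties in the HA problem), which is precisely what makes $w_M(a,f(a))=2$ whenever $M(a)\neq f(a)$; this is exactly where the proof for the HAT problem will later have to be refined using the Dulmage--Mendelsohn decomposition.
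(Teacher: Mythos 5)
Your proof is correct and is essentially the paper's argument in contrapositive form: both rest on combining dual feasibility \eqref{EQdualconst1} with the $\{0,1\}$-bound from Lemma \ref{LEMy01} to conclude that no house adjacent to $a$ can be strictly preferred to $M(a)$. The paper phrases it directly (for every $h\in\Gamma(a)\setminus\{M(a)\}$ one gets $w_M(a,h)\le 1$, hence $M(a)\succ_a h$), while you specialize to the single edge $(a,f(a))$ and derive a contradiction; the content is the same.
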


\begin{proof}
    Let $a\in A_0$ and $h \in \Gamma(a) \setminus \{M(a)\}$. 
    It follows from \eqref{EQdualconst1} and Lemma \ref{LEMy01} that 
    $w_M(a,h)\le y^*(a) + y^*(h) \le 0+1=1$. 
    It thus holds that $M(a)\succ_a h$, 
    which implies that $M(a)=f(a)$. 
\end{proof}

\begin{lemma}
    \label{LEMA1HA}
    For each applicant $a\in A_1$, 
    it holds that $M(a) \in \{f(a),s(a)\}$. 
\end{lemma}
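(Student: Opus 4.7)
The plan is to dispatch the case $M(a) = f(a)$ immediately and, in the remaining subcase, to show $M(a) = s(a)$ by verifying the two defining properties of $s(a)$: that $M(a) \in \Gamma(a) \setminus H_f$, and that no house in $\Gamma(a) \setminus H_f$ is strictly preferred by $a$ to $M(a)$. The tools are the complementary slackness conditions \eqref{EQCS1} and \eqref{EQCS2}, the $\{0,1\}$-integrality of $y^*$ established in Lemma~\ref{LEMy01}, and the identification of $A_0$-applicants supplied by Lemma~\ref{LEMA0HA}.

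First I would establish the second property: every house $h \in \Gamma(a)$ with $h \succ_a M(a)$ belongs to $H_f$. Since $a \in A_1$ and $(a,M(a)) \in M$, condition \eqref{EQCS1} gives $y^*(M(a)) = 0$. For any such $h$ we have $w_M(a,h) = 2$, so \eqref{EQdualconst1} together with Lemma~\ref{LEMy01} forces $y^*(h) = 1$; then \eqref{EQCS2} produces an applicant $a'$ with $h = M(a')$, a second application of \eqref{EQCS1} yields $y^*(a') = 0$, so $a' \in A_0$, and Lemma~\ref{LEMA0HA} gives $h = f(a') \in H_f$.

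Next, to verify the first property in the subcase $M(a) \neq f(a)$, I would argue by contradiction: if $M(a) \in H_f$, then $M(a) = f(a_2)$ for some applicant $a_2$, and the assumption $M(a) \neq f(a)$ forces $a_2 \neq a$. Since $M$ is $A$-perfect and $M(a)$ is already matched to $a$, the applicant $a_2$ must satisfy $M(a_2) \neq f(a_2) = M(a)$, so $f(a_2) \succ_{a_2} M(a_2)$ and $w_M(a_2, f(a_2)) = 2$. Combined with $y^*(f(a_2)) = y^*(M(a)) = 0$, constraint \eqref{EQdualconst1} would then demand $y^*(a_2) \ge 2$, contradicting $y^* \in \{0,1\}^{A \cup H}$.

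I do not anticipate a substantial obstacle: the argument is a systematic chase through complementary slackness in which the $\{0,1\}$-range of $y^*$ interacts with the $\{0,1,2\}$-range of $w_M$ to rule out every forbidden configuration. The one step requiring care is confirming $a_2 \neq a$ in the contradiction, which relies precisely on the standing hypothesis $M(a) \neq f(a)$; without it the contradiction collapses, which is exactly what necessitates the initial case split on whether $M(a)$ equals $f(a)$.
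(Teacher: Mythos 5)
Your proof is correct and follows essentially the same route as the paper's: both show that every house $a$ strictly prefers to $M(a)$ lies in $H_f$ via the dual constraint, complementary slackness, and Lemma~\ref{LEMA0HA}, and both rule out $M(a)\in H_f$ by deriving $y^*(a'')\ge 2$ for a hypothetical applicant $a''$ with $f(a'')=M(a)$, contradicting Lemma~\ref{LEMy01}. Your explicit remark that $a_2\neq a$ follows from the standing hypothesis $M(a)\neq f(a)$ makes precise a step the paper leaves implicit.
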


\begin{proof}
    Let $a\in A_1$ and 
    suppose that $M(a)\neq f(a)$. 
    We prove the lemma by showing that $M(a) = s(a)$. 

    Let $h\in \Gamma(a)$ satisfy $h\succ_a M(a)$. 
    It follows from the definition \eqref{EQweightHA} of $w_M$ that $w_M(a,h) = 2$. 
    It then follows from \eqref{EQdualconst1} that 
    $y^*(h) \ge w_M(a,h) - y^*(a) =1$, 
    and hence from Lemma \ref{LEMy01} that $y^*(h)  =1$. 
    From \eqref{EQCS2}, 
    we obtain that $(a',h)\in M$ for some $a'\in A$. 
    From \eqref{EQCS1}, 
    we derive $y^*(a') = 1 - y^*(h) = 0$ and hence $a'\in A_0$. 
    By applying Lemma \ref{LEMA0HA}, 
    we have that $h=f(a')$ and hence $h\in H_f$. 

    We next show that $M(a) \not\in H_f$. 
    Suppose to the contrary that $M(a) = f(a'')$ for some applicant $a''\in A$. 
    We have that $(a'',M(a))\not\in M$, 
    and thus $w_M(a'',M(a))=2$. 
    It follows from \eqref{EQCS1} that $y^*(M(a))=1-y^*(a)=0$, 
    and from \eqref{EQdualconst1} that $y^*(a'')\ge w_M(a'',M(a)) - y^*(M(a))=2$, contradicting Lemma \ref{LEMy01}. 

    Therefore, 
    we have shown that $h\in H_f$ for each $h\in \Gamma(a)$ satisfying $h\succ_a M(a)$, and 
    $M(a)\not \in H_f$. 
    We hence conclude that $M(a)=s(a)$. 
\end{proof}

Lemmas \ref{LEMA0HA} and \ref{LEMA1HA} amount to Property \ref{ENUgraphHA2} in Theorem \ref{THMgraphHA}. 
We finally derive Property \ref{ENUgraphHA1} in Theorem \ref{THMgraphHA}. 

\begin{lemma}
    Each house $h\in H_f$ is matched by $M$. 
\end{lemma}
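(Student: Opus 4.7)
The plan is to assume for contradiction that some $h \in H_f$ is unmatched by $M$, and then extract a contradiction from the complementary slackness conditions together with Lemma \ref{LEMy01}. Since $h \in H_f$, there exists some applicant $a' \in A$ with $f(a') = h$; the unmatched assumption will give $M(a') \neq h$, so by the definition of $f(a')$ we obtain $h \succ_{a'} M(a')$ and therefore $w_M(a',h) = 2$ by \eqref{EQweightHA}.

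From here I would push the argument through the dual. Applying the constraint \eqref{EQdualconst1} to the edge $(a',h)$ gives $y^*(a') + y^*(h) \ge 2$, and by Lemma \ref{LEMy01} both $y^*(a')$ and $y^*(h)$ lie in $\{0,1\}$, forcing $y^*(h) = 1$. Now the assumption that $h$ is unmatched by $M$ means $\sum_{a \in \Gamma(h)} \chi_M(a,h) = 0$, so the complementary slackness condition \eqref{EQCS2} forces $y^*(h) = 0$, yielding the required contradiction.

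This argument is almost immediate once Lemma \ref{LEMy01} and the complementary slackness conditions \eqref{EQCS1}--\eqref{EQCS2} are in hand, so there is no real obstacle; the one subtlety worth flagging explicitly is that the existence of the applicant $a'$ with $f(a') = h$ comes from the very definition of $H_f$, and the inequality $h \succ_{a'} M(a')$ requires $h \neq M(a')$, which is precisely what the contradiction hypothesis provides. The remainder is just a direct chain of implications combining primal infeasibility of $h$ being saturated with dual tightness.
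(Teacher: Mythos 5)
Your proof is correct and uses the same ingredients as the paper's: complementary slackness \eqref{EQCS2} to get $y^*(h)=0$ for an unmatched house, Lemma \ref{LEMy01}, and the dual feasibility constraint \eqref{EQdualconst1}. The paper argues the contrapositive directly (showing $w_M(a,h)\le 1$ and hence $w_M(a,h)=0$ for every $a\in\Gamma(h)$), while you run the same chain as a contradiction starting from a specific $a'$ with $f(a')=h$; the two are essentially identical.
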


\begin{proof}
    Let $h\in H$ be a house not matched by $M$. 
    It suffices to show that $h\not\in H_f$. 
    We have seen in the proof of Lemma \ref{LEMy01} that $y^*(h)=0$ if $h$ is not matched by $M$. 
    Let $a\in \Gamma(h)$. 
    It follows from \eqref{EQCS1} and Lemma \ref{LEMy01} that $w_M(a,h)\le y^*(a)+y^*(h) \le 1$. 
    Since $(a,h)\not \in M$, 
    this implies that $w_M(a,h) = 0$. 
    We thus obtain that $h\neq f(a)$ for each $a\in \Gamma(h)$, and hence $h\not\in H_f$. 
\end{proof}

We have now proved that 
a maximum-weight $A$-perfect matching $M$ in the weighted graph $(G,w_M)$ 
satisfies Properties \ref{ENUgraphHA1} and \ref{ENUgraphHA2} in Theorem \ref{THMgraphHA}.

\section{Equivalence of the characterizations for the HAT problem}
\label{SECHAT}

In this section, 
we prove that the two characterizations in Theorems \ref{THMgraphHAT} and \ref{THMoptHAT} are equivalent 
without using the fact that $M$ is popular.

\subsection{Deriving the optimization-based characterization  for the HAT problem}

Let $(G=(A,H;E),(\Gamma(a),\succsim_a)_{a\in A})$ be an instance of the 
HAT problem. 
Suppose that a matching $M\subseteq E$ satisfies 
the graph-structural characterization, 
i.e.,\ 
Properties \ref{ENUgraphHAT1} and \ref{ENUgraphHAT2} in 
Theorem \ref{THMgraphHAT}. 
We prove that $M$ is a maximum-weight $A$-perfect matching in the weighted graph $(G,w_M)$, 
where the edge-weight vector $w_M\in \{0,1,2\}^E$ is defined by \eqref{EQweightHAT}. 

We prove that the characteristic vector $\chi_M \in \{0,1\}^E$ of $M$ 
is an optimal solution for \eqref{EQLPobj}--\eqref{EQLPconst3} by 
constructing a feasible solution of the dual problem \eqref{EQdualobj}--\eqref{EQdualconst2} with objective value $|A|$. 
By using Property \ref{ENUgraphHAT2} of Theorem \ref{THMgraphHAT}, 
we partition the set $A$ of the applicants into two sets $A_f$ and $A_s$, 
where 
\begin{align*}
    A_f = \{a\in A \colon M(a) \in f(a)\}, \quad A_s = \{a\in A \colon M(a) \in s(a) \setminus f(a)\}. 
\end{align*}
Then define $y^* \in \{0,1\}^{A\cup H}$ by 
\begin{alignat}{2}
    &{}y^*(a) = 
    \begin{cases}
        0 & \mbox{($a$ is unreachable, or $a$ is even and $a\in A_f$)}, \\
        1 & \mbox{($a$ is odd, or $a$ is even and $a\in A_s$)} 
    \end{cases}
    &\quad& (a\in A),
    \label{EQyaHAT}
    \\
    &{}y^*(h) = 
    \begin{cases}
        0 & \mbox{($h$ is even)}, \\
        1 & \mbox{($h$ is odd or unreachable)} 
    \end{cases}
    &\quad& (h\in H).    
    \label{EQyhHAT}
\end{alignat}

\begin{lemma}
    \label{LEMdualHAT}
    If a matching $M\subseteq E$ in $G$ satisfies Properties \ref{ENUgraphHAT1} and \ref{ENUgraphHAT2} in Theorem \ref{THMgraphHAT}, 
    then the vector $y^*\in \{0,1\}^{A\cup H}$ defined by \eqref{EQyaHAT} and \eqref{EQyhHAT} is a feasible solution of the dual problem \eqref{EQdualobj}--\eqref{EQdualconst2} 
    with objective value $|A|$. 
\end{lemma}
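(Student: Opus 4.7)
The plan is to verify, in order: (a) the non-negativity constraint \eqref{EQdualconst2}; (b) the edge constraint \eqref{EQdualconst1} for every $(a,h)\in E$; and (c) the objective identity $\sum_{a\in A}y^*(a)+\sum_{h\in H}y^*(h)=|A|$. Part (a) is immediate from $y^*\in\{0,1\}^{A\cup H}$. Before attacking (b) I would record two preliminary facts. First, because $A_s$ is precisely the set of applicants $a$ with $M(a)\notin f(a)$, every $a\in A_s$ is unmatched in $M_f=M\cap E_f$, and Lemma~\ref{LEMDM}\ref{ENUDM1} applied to $G_f$ then forces such an $a$ to be even; hence $A_s\subseteq A_{\mathrm{even}}$ and $y^*(a)=1$ whenever $a\in A_s$. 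Second, a short alternating-path argument on $G_f$---a standard Dulmage--Mendelsohn consequence not stated verbatim in Lemma~\ref{LEMDM}---shows that every neighbor in $G_f$ of an even vertex $a\in A$ is odd, and every neighbor in $G_f$ of an unreachable vertex $a\in A$ is unreachable.

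For (b) I would split on $w_M(a,h)\in\{0,1,2\}$. The case $w_M(a,h)=0$ is vacuous. For $w_M(a,h)=2$, i.e.\ $h\succ_a M(a)$, the top-preference property of $f(a)$ rules out $a\in A_f$, so $a\in A_s$ and $y^*(a)=1$; and the maximality of $M(a)\in s(a)$ among even houses in $G_f$ rules out $h$ being even, so $y^*(h)=1$. For $w_M(a,h)=1$, i.e.\ $h\sim_a M(a)$, the subcase $a\in A_s$ is handled by $y^*(a)=1$; in the subcase $a\in A_f$, the relations $M(a)\in f(a)$ and $h\sim_a M(a)$ force $h\in f(a)$ and hence $(a,h)\in E_f$, so I case-split on the DM-class of $a$ in $G_f$ using the second preliminary fact: odd $a$ yields $y^*(a)=1$, even $a$ forces $h$ odd and hence $y^*(h)=1$, and unreachable $a$ forces $h$ unreachable and again $y^*(h)=1$.

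For (c), using $A_s\subseteq A_{\mathrm{even}}$ the two sums reduce to
\begin{align*}
\sum_{a\in A}y^*(a)+\sum_{h\in H}y^*(h)=|A_{\mathrm{odd}}|+|A_s|+|H_{\mathrm{odd}}|+|H_{\mathrm{unr}}|.
\end{align*}
Since $A_f$ coincides with the set of $M_f$-matched applicants, Lemma~\ref{LEMDM}\ref{ENUDM2} supplies a bijection $A_{\mathrm{even}}\cap A_f\leftrightarrow H_{\mathrm{odd}}$ via $M_f$, and Lemma~\ref{LEMDM}\ref{ENUDM3} a bijection $A_{\mathrm{unr}}\leftrightarrow H_{\mathrm{unr}}$; combined with $A_{\mathrm{even}}=(A_{\mathrm{even}}\cap A_f)\sqcup A_s$, these give $|H_{\mathrm{odd}}|=|A_{\mathrm{even}}|-|A_s|$ and $|H_{\mathrm{unr}}|=|A_{\mathrm{unr}}|$, so the total collapses to $|A_{\mathrm{odd}}|+|A_{\mathrm{even}}|+|A_{\mathrm{unr}}|=|A|$. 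The main obstacle I anticipate is the feasibility subcase with $a\in A_f$ even in $G_f$ and $w_M(a,h)=1$: there $y^*(a)=0$, so the inequality must be extracted from the DM-structure of $G_f$ rather than from Property~\ref{ENUgraphHAT1} of Theorem~\ref{THMgraphHAT} directly, and establishing the two preliminary structural reductions is the technical core of the argument.
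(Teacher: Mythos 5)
Your proof is correct and follows essentially the same route as the paper's: the same dual vector, the same case analysis organized around the partition $\{A_f,A_s\}$ and the Dulmage--Mendelsohn classes of $G_f$ (with $A_s$ applicants shown to be even, $G_f$-neighbours of even applicants shown to be odd, and houses preferred to $s(a)$ shown not to be even), and the same $M_f$-induced bijections for the objective-value count. One minor shared imprecision: a $G_f$-neighbour of an unreachable applicant need not itself be unreachable (it can be odd), but all that is needed---and all that holds---is that it is not even, so $y^*(h)=1$ still follows; the paper's Case~1.3 makes the same overstatement.
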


\begin{proof}
    We first show that $y^*$ defined by \eqref{EQyaHAT} and \eqref{EQyhHAT} satisfies the constraints 
    \eqref{EQdualconst1} and \eqref{EQdualconst2}. 
    The constraint \eqref{EQdualconst2} is clearly satisfied. 
    Below we show that \eqref{EQdualconst1} is satisfied for each edge $(a,h)\in E$, 
    where $a\in A$ and $h\in H$.


    \paragraph{Case 1: $a\in A_f$}
    We have that $M(a)\in f(a)$. 
    Thus, if $(a,h)\in E \setminus E_f$, 
    it follows that $w_M(a,h)=0$ and hence we are done. 
    Below we assume $(a,h)\in E_f$. 
    It then follows that $w_M(a,h)=1$. 
    \paragraph{Case 1.1: $a\in A_f$ is even}
    In this case, 
    it follows that $h$ is odd, 
    and thus
    $y^*(h)=1$. 

    \paragraph{Case 1.2: $a\in A_f$ is odd}
    In this case, 
    we have that $y^*(a)=1$. 

    \paragraph{Case 1.3: $a\in A_f$ is unreachable}
    In this case, 
    $h$ is also unreachable, and thus $y^*(h)=1$. 

    Therefore, 
    in Cases 1.1--1.3, 
    we have that $y^*(a)+y^*(h)\ge 1 = w_M(a,h)$. 

    \paragraph{Case 2: $a\in A_s$}
    We have that $M(a)\in s(a)\setminus f(a)$, implying that $(a,M(a))\not \in E_f$. 
    In particular $a$ is unmatched by $M_f$, 
    and hence $a$ is even. 
    It thus follows from \eqref{EQyaHAT} that $y^*(a)=1$.

    Without loss of generality 
    we assume that $h \succ_a M(a)$, 
    because otherwise $w_M(a,h)\le 1 \le y^*(a)+y^*(h)$ holds 
    regardless whether $y^*(h)=0$ or $y^*(h)=1$. 
    It then follows from $M(a)\in s(a)$ and the definition of $s(a)$ that $h$ is odd or unreachable. 
    We thus conclude that $y^*(h)=1$, 
    $y^*(a)=1$, and $w_M(a,h)=2$, 
    and hence we are done.

    We now show that the objective value of $y^*$ is equal to $|A|$. 
    It follows from Properties \ref{ENUDM1} and \ref{ENUDM3} in Lemma \ref{LEMDM} that 
    the number of unreachable applicants in $A$ is equal to that of unreachable houses in $H$. 
    It also follows from Properties \ref{ENUDM1} and \ref{ENUDM2} in Lemma \ref{LEMDM} that 
    the number of odd houses in $H$ is equal to that of even applicants in $A_f$. 
    Hence we have that $|\{a\in A\colon y^*(a)=0\}| = |\{h\in H\colon y^*(h)=1\}| $. 
    We thus conclude that the objective value of $y^*$ is 
    \begin{align*}
        \sum_{a\in A} y^*(a) + \sum_{h\in H} y^*(h)  
        {}&{}= |\{a\in A\colon y^*(a)=1\}| + |\{h\in H\colon y^*(h)=1\}| \\
        {}&{}= |\{a\in A\colon y^*(a)=1\}| + |\{a\in A\colon y^*(a)=0\}| 
        = |A|,
    \end{align*}
    completing the proof. 
\end{proof}

On the basis of the strong duality theorem for linear optimization, 
we obtain from Lemmas \ref{LEMprimeHA} and \ref{LEMdualHAT} that 
$\chi_M$ is an optimal solution of \eqref{EQLPobj}--\eqref{EQLPconst3}, 
implying that $M$ is a maximum-weight $A$-perfect matching in the weighted graph $(G,w_M)$. 

Corresponding to Corollary \ref{CORwcoverHA}, 
a corollary on a minimum $w_M$-cover is obtained as follows. 

\begin{corollary}
\label{CORwcoverHAT}
    Let $(G=(A,H;E),(\Gamma(a),\succsim_a)_{a\in A})$ be an instance of the HAT problem. 
    In the weighted graph $(G,w_M)$ 
    in which $w_M$ is defined by \eqref{EQweightHAT}, 
    the vector $y^*$ defined by \eqref{EQyaHAT} and \eqref{EQyhHAT} is a minimum $w_M$-vertex cover. 
\end{corollary}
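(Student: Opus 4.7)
The plan is to deduce Corollary \ref{CORwcoverHAT} almost immediately from Lemma \ref{LEMdualHAT}, by combining the verified dual feasibility of $y^*$ with a primal certificate of optimality coming from $M$ itself.

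First, I would check that $y^*$ is a $w_M$-vertex cover in the sense of Section \ref{SECbasics}, i.e., that it satisfies \eqref{EQdualmm_const1}--\eqref{EQdualmm_const3}. The edge constraints \eqref{EQdualmm_const1} coincide with \eqref{EQdualconst1}, which were already verified in the proof of Lemma \ref{LEMdualHAT}. The non-negativity requirements \eqref{EQdualmm_const2} and \eqref{EQdualmm_const3} hold automatically from the definitions \eqref{EQyaHAT} and \eqref{EQyhHAT}, since $y^*$ takes values in $\{0,1\}$. Hence $y^*$ is a feasible solution of \eqref{EQdualmm_obj}--\eqref{EQdualmm_const3}.

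Second, I would argue minimality via weak LP duality between \eqref{EQLPmm_obj}--\eqref{EQLPmm_const3} and \eqref{EQdualmm_obj}--\eqref{EQdualmm_const3}. The matching $M$ itself provides a feasible primal solution $\chi_M$ with objective value $\sum_{(a,h)\in M} w_M(a,h) = |A|$, because by \eqref{EQweightHAT} every edge $(a,M(a))$ has weight $w_M(a,M(a)) = 1$. Thus the maximum-weight matching in $(G,w_M)$ has weight at least $|A|$, and consequently every feasible $w_M$-vertex cover has objective value at least $|A|$. Since $y^*$ attains the value $|A|$ by Lemma \ref{LEMdualHAT}, it is a minimum $w_M$-vertex cover.

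The argument presents no real obstacle: all the technical work—verifying the edge constraints and computing the objective value—is already contained in Lemma \ref{LEMdualHAT}, and what remains is only the easy observation that the $0/1$ range of $y^*$ supplies the extra sign constraint $y^*(a)\ge 0$ on the $A$-side, together with the trivial primal certificate given by $M$.
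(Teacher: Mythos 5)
Your proposal is correct and follows essentially the same route as the paper: the corollary is obtained by combining the dual feasibility and objective value $|A|$ from Lemma \ref{LEMdualHAT} with the primal certificate $\chi_M$ of value $|A|$ and LP duality. Your explicit remark that the $\{0,1\}$ range of $y^*$ supplies the sign constraint $y^*(a)\ge 0$ needed to pass from the dual of the $A$-perfect program \eqref{EQdualobj}--\eqref{EQdualconst2} to a genuine $w_M$-vertex cover in the sense of \eqref{EQdualmm_const1}--\eqref{EQdualmm_const3} is a detail the paper leaves implicit, and it is exactly the right observation.
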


An interpretation of Corollary \ref{CORwcoverHAT} is as follows. 
If a matching $M$ in $G$ satisfies Properties \ref{ENUgraphHAT1} and \ref{ENUgraphHAT2} in Theorem \ref{THMgraphHAT} 
(i.e.,\ if $M$ is a popular matching), 
then 
a minimum $w_M$-vertex cover is obtained from the partition $\{A_f,A_s\}$ of $A$ and the Dulmage-Mendelsohn decomposition 
of $G_f$. 
Similarly to those for the HA problem, 
these are defined 
without reference to either the linear program \eqref{EQLPobj}--\eqref{EQLPconst3} or \eqref{EQdualobj}--\eqref{EQdualconst2}.
Further, 
again a minimum $w_M$-vertex cover is obtained as a $\{0,1\}$-vector, 
while $w_M$ is a $\{0,1,2\}$-vector. 

\begin{corollary}
    \label{CORHAT}
    The dual problem \eqref{EQdualobj}--\eqref{EQdualconst2} in which $w_M$ is defined by \eqref{EQweightHAT} has a $\{0,1\}$-optimal solution. 
\end{corollary}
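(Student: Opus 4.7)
The plan is to derive Corollary \ref{CORHAT} as an immediate consequence of the feasible solutions constructed earlier in this subsection, together with linear-programming duality; no new construction is required. The content of the corollary is essentially bookkeeping: it records that the dual feasible $y^*$ produced in Lemma \ref{LEMdualHAT} is in fact dual optimal.

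First, I would invoke Lemma \ref{LEMprimeHA}, which asserts that the characteristic vector $\chi_M$ of the $A$-perfect matching $M$ is a feasible solution of the primal linear program \eqref{EQLPobj}--\eqref{EQLPconst3}. Since $M(a)\sim_a M(a)$ holds tautologically for every $a\in A$, definition \eqref{EQweightHAT} gives $w_M(e)=1$ for every $e\in M$, and hence the primal objective value of $\chi_M$ equals $|M|=|A|$.

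Second, I would invoke Lemma \ref{LEMdualHAT}, which provides the $\{0,1\}$-valued vector $y^*$ defined by \eqref{EQyaHAT} and \eqref{EQyhHAT} and shows that $y^*$ is a feasible solution of the dual problem \eqref{EQdualobj}--\eqref{EQdualconst2} with objective value $|A|$. Applying weak duality to this pair of linear programs, the coincidence of the primal and dual objective values at $|A|$ forces both $\chi_M$ and $y^*$ to be optimal. In particular, $y^*$ is a $\{0,1\}$-optimal solution of the dual, which is precisely the statement of the corollary.

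There is no genuine obstacle at this stage: the combinatorial work has already been carried out in Lemma \ref{LEMdualHAT}, where the Dulmage--Mendelsohn decomposition of $G_f$ together with the partition $\{A_f,A_s\}$ of $A$ is used to write $y^*$ down explicitly. The point worth emphasizing, parallel to the remark preceding Corollary \ref{CORHA}, is that Lemma \ref{LEMintmm} alone would only yield an \emph{integral} dual optimum whose coordinates could in principle take the value $2$ (since $w_M$ is a $\{0,1,2\}$-vector); the corollary strengthens this conclusion by exhibiting a $\{0,1\}$-valued optimum.
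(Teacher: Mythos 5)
Your proposal is correct and matches the paper's own reasoning: the paper likewise obtains the corollary by pairing the primal feasible $\chi_M$ of objective value $|A|$ (Lemma \ref{LEMprimeHA}) with the $\{0,1\}$-valued dual feasible $y^*$ of objective value $|A|$ from Lemma \ref{LEMdualHAT}, concluding optimality of both via duality (the paper cites strong duality, though, as you note, weak duality already suffices once the objective values coincide). No gaps.
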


\subsection{Deriving the graph-structural characterization for the HAT problem}

Let $(G=(A,H;E),(\Gamma(a),\succsim_a)_{a\in A})$ be an instance of the 
HAT problem, 
and suppose that a matching $M$ in $G$ is a maximum-weight $A$-perfect matching in the weighted graph $(G,w_M)$, 
where $w_M$ is defined by \eqref{EQweightHAT}. 
We prove that $M$ satisfies Properties \ref{ENUgraphHAT1} and \ref{ENUgraphHAT2} in Theorem \ref{THMgraphHAT}. 

In the same manner as in Section \ref{SECHA2}, 
we have that 
the dual problem \eqref{EQdualobj}--\eqref{EQdualconst2} admits an 
integral optimal solution $y^*\in \ZZ^{A\cup H}$
satisfying \eqref{EQCS1} and \eqref{EQCS2}. 
Further, 
it is straightforward to verify that 
Lemma \ref{LEMy01} applies to the HAT problem as well, 
i.e.,\ $y^*\in \{0,1\}^{A\cup H}$. 
On the basis of this fact, 
we partition each of $A$ and $H$ into two sets: 
\begin{align*}
    &{}A_0 = \{a\in A \colon y^*(a) =0\}, \quad A_1 = \{a\in A \colon y^*(a) =1\}, \\
    &{}H_0 = \{h\in H \colon y^*(h) =0\}, \quad H_1 = \{h\in H \colon y^*(h) =1\}. 
\end{align*}

\begin{lemma}
    \label{LEMA0H1}
    It holds that 
    $M(a) \in H_1$ for each $a\in A_0$ and 
    $M(h) \in A_0$ for each $h\in H_1$, 
    implying that $|A_0|=|H_1|$. 
\end{lemma}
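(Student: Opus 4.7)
The plan is to derive both containments directly from the complementary slackness conditions \eqref{EQCS1} and \eqref{EQCS2}, using only two facts: $M$ is $A$-perfect, and $w_M(e) = 1$ whenever $e \in M$. The second fact is immediate from \eqref{EQweightHAT}, since $h = M(a)$ gives $h \sim_a M(a)$.

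For the first assertion, I take an arbitrary $a \in A_0$. Because $M$ is $A$-perfect, there is some edge $(a,h) \in M$; applying \eqref{EQCS1} together with $w_M(a,h) = 1$ yields $y^*(a) + y^*(h) = 1$, and since $y^*(a) = 0$ this forces $y^*(h) = 1$, i.e., $M(a) \in H_1$. Injectivity of the map $a \mapsto M(a)$ from $A_0$ into $H_1$ is then immediate from $M$ being a matching.

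For the second assertion, I take $h \in H_1$. Since $y^*(h) = 1 > 0$, condition \eqref{EQCS2} yields $\sum_{a \in \Gamma(h)} \chi_M(a,h) = 1$, so $h$ is matched by $M$ to some $a = M(h)$; applying \eqref{EQCS1} to $(a,h) \in M$ and using $y^*(h) = 1$ forces $y^*(a) = 0$, so $M(h) \in A_0$. This inclusion simultaneously witnesses surjectivity of the map constructed in the previous paragraph, so $|A_0| = |H_1|$ follows from the two containments together.

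I do not anticipate any substantive obstacle: the proof is a short, self-contained consequence of complementary slackness combined with the $A$-perfectness of $M$, and notably does not invoke the graph-structural hypotheses of Theorem \ref{THMgraphHAT} or the Dulmage--Mendelsohn decomposition of $G_f$; those will enter only in subsequent lemmas that refine the structure of $A_0$, $A_1$, $H_0$, $H_1$.
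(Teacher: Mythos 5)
Your proof is correct and follows essentially the same route as the paper's: both directions are read off from the complementary slackness conditions \eqref{EQCS1} and \eqref{EQCS2} together with the $A$-perfectness of $M$ and the fact that $w_M(e)=1$ on $M$, and the cardinality equality follows from the resulting bijection. No gaps.
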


\begin{proof}
    Let $a\in A_0$. 
    Since $M$ is an $A$-perfect matching, 
    we have that $M(a)$ exists. 
    It follows from the complementarity slackness \eqref{EQCS1} that 
    $ y^*(M(a))=1 - y^*(a) = 1$, 
    and hence $M(a)\in H_1$. 

    Let $h\in H_1$. 
    It follows from 
    the complementarity slackness \eqref{EQCS2} that 
    $h$ is matched by $M$. 
    Now the same argument derives that $M(h)\in A_0$. 
\end{proof}

The next lemma derives Property \ref{ENUgraphHAT2} in Theorem \ref{THMgraphHAT} 
for the applicants in $A_0$. 

\begin{lemma}
    \label{LEMA0}
    For each $a\in A_0$, 
    it holds that $M(a)\in f(a)$. 
\end{lemma}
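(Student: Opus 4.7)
The plan is to transport the proof of Lemma \ref{LEMA0HA} (the HA analogue) verbatim, leveraging the fact that Lemma \ref{LEMy01} has already been noted to extend to the HAT setting, so $y^* \in \{0,1\}^{A\cup H}$ is available as a tool. The underlying mechanism is the same: dual feasibility \eqref{EQdualconst1} combined with the bounds on the coordinates of $y^*$ forces the weights $w_M(a,h)$ along edges incident to $a$ to be small, and smallness in $w_M$ translates via \eqref{EQweightHAT} into a preference statement about $M(a)$.

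Concretely, I would fix $a\in A_0$ and let $h\in \Gamma(a)$ be arbitrary. Applying \eqref{EQdualconst1} gives $w_M(a,h)\le y^*(a)+y^*(h)$, and since $y^*(a)=0$ by definition of $A_0$ and $y^*(h)\le 1$ by Lemma \ref{LEMy01} (extended to the HAT case), we obtain $w_M(a,h)\le 1$. Reading off from the definition \eqref{EQweightHAT} of $w_M$, the inequality $w_M(a,h)\le 1$ is equivalent to $h\not\succ_a M(a)$, i.e., $M(a)\succsim_a h$. As this holds for every $h\in \Gamma(a)$, the house $M(a)$ is among those most preferred by $a$, so $M(a)\in f(a)$ by definition of $f(a)$.

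There is essentially no obstacle here; this is a short deduction once Lemma \ref{LEMy01} is in hand. The only place where one might pause is in noticing that, in contrast to the HA setting (where strict preferences force a unique most preferred house and the conclusion was the equation $M(a)=f(a)$), the HAT analogue must be phrased as membership $M(a)\in f(a)$ because $f(a)$ is now a set, but the weight definition \eqref{EQweightHAT} is exactly designed so that the same ``$w_M\le 1$'' condition carries the right semantic content through ties.
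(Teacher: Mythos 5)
Your proof is correct and follows essentially the same route as the paper's: both apply dual feasibility \eqref{EQdualconst1} with $y^*(a)=0$ and $y^*(h)\le 1$ (via the HAT extension of Lemma \ref{LEMy01}) to get $w_M(a,h)\le 1$, then read off $M(a)\succsim_a h$ from \eqref{EQweightHAT} for every $h\in\Gamma(a)$, hence $M(a)\in f(a)$.
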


\begin{proof}
Let $a\in A_0$ and let $h$ be an arbitrary house in $\Gamma(a)$. 
It follows from \eqref{EQdualconst1} that 
\begin{align*}
    w_M(a,h) \le y^*(a)+y^*(h)=y^*(h)\le 1.
\end{align*} 
We thus obtain from \eqref{EQweightHAT} that 
$M(a)\sim_a h$ or $M(a)\succ_a h$, 
implying that $M(a)\in f(a)$. 
\end{proof}

We then derive Property \ref{ENUgraphHAT1} in Theorem \ref{THMgraphHAT}. 
Define  $A'_1\subseteq A_1$ 
by 
\begin{align}
    A_1' = \{a\in A_1 \colon f(a) \cap H_0 \neq \emptyset\}.
    \label{EQA1prime}
\end{align}
Below we show that $A_1' \cup H_1$ is a vertex cover of $G_f$ satisfying that 
$|A_1' \cup H_1|=|M_f|$. 

\begin{lemma}
\label{LEMcover}
    It holds that $A_1' \cup H_1$ is a vertex cover in $G_f=(A,H;E_f)$. 
\end{lemma}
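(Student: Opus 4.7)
The plan is to show directly that every edge of $G_f$ has at least one endpoint in $A_1'\cup H_1$, by a two-case analysis on whether the applicant endpoint lies in $A_0$ or $A_1$. Take an arbitrary edge $(a,h)\in E_f$, so $h\in f(a)$.

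First I would dispose of the case $a\in A_1$ by a purely set-theoretic observation. Either $a\in A_1'$, and the edge is covered on the applicant side, or else $a\in A_1\setminus A_1'$, which by the definition \eqref{EQA1prime} of $A_1'$ means $f(a)\cap H_0=\emptyset$, i.e., $f(a)\subseteq H_1$; but then $h\in f(a)\subseteq H_1$ and the edge is covered on the house side. So in this case no weight argument is needed.

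The less immediate case is $a\in A_0$, and this is where the dual feasibility and the definition of the weight $w_M$ in \eqref{EQweightHAT} come in. Since $h\in f(a)$, the house $h$ is at least as preferred by $a$ as any house in $\Gamma(a)$, and in particular $h\succsim_a M(a)$. Consequently $w_M(a,h)\ge 1$ by \eqref{EQweightHAT}. The dual feasibility constraint \eqref{EQdualconst1} then yields $y^*(a)+y^*(h)\ge w_M(a,h)\ge 1$, and combined with $y^*(a)=0$ (because $a\in A_0$) this forces $y^*(h)\ge 1$; using $y^*\in\{0,1\}^{A\cup H}$ from Lemma \ref{LEMy01}, we conclude $y^*(h)=1$, i.e., $h\in H_1$.

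Combining the two cases, every edge of $G_f$ is incident to $A_1'\cup H_1$, establishing that this set is a vertex cover. The main (but very mild) obstacle is recognizing the right weight inequality in the $A_0$ case: the key point is that membership of $h$ in $f(a)$ is strong enough to guarantee $w_M(a,h)\ge 1$ without invoking Lemma \ref{LEMA0}, so the argument does not require knowing that $M(a)\in f(a)$ beforehand.
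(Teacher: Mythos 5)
Your proof is correct and follows essentially the same route as the paper's: the same case split on $a\in A_1$ versus $a\in A_0$, with the $A_1$ case handled set-theoretically via the definition of $A_1'$ and the $A_0$ case via dual feasibility \eqref{EQdualconst1}. The one small difference is that the paper invokes Lemma \ref{LEMA0} to get $w_M(a,h)=1$ exactly, whereas you correctly observe that $h\in f(a)$ already forces $w_M(a,h)\ge 1$, which suffices and removes the dependence on that lemma.
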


\begin{proof}
    Let $(a,h)\in E_f$, i.e.,\ $h\in f(a)$. 
    It suffices to show that $a\in  A\setminus A_1'$ implies $h\in H_1$ and 
    $h\in H_0$ implies $a\in A_1'$. 

    First 
    suppose that $a\in A_1\setminus A_1'$. 
    It follows from the definition \eqref{EQA1prime} of $A_1'$ that 
    $f(a)\cap H_0=\emptyset$. 
    Then $h\in H_1$ follows from $h\in f(a)$. 

    Next 
    suppose that $a \in A_0$. 
    It follows from Lemma \ref{LEMA0} that $(a,h) \sim_a M(a)$ and hence  $w_M(a,h)=1$. 
    It then follows from \eqref{EQdualconst1} that $y^*(h) \ge w_M(a,h)-y^*(a)=1$ and hence $h\in H_1$. 

    Finally, 
    suppose that $h\in H_0$. 
    The above argument implies that $a\not\in A_0$, namely $a\in A_1$. 
    Then, 
    $a\in A_1'$ follows from 
    \eqref{EQA1prime} and 
    the fact that 
    $h\in f(a) \cap H_0$. 
\end{proof}

\begin{lemma}
\label{LEMmf}
    $|M_f| = |A_1'| +|H_1|$. 
\end{lemma}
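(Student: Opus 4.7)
The plan is to compute $|M_f|$ by identifying exactly which applicants are matched by $M_f$, and then convert the count using Lemma \ref{LEMA0H1}. More precisely, I will show that $M_f$ matches every applicant in $A_0 \cup A_1'$ and no applicant in $A_1 \setminus A_1'$, so that $|M_f| = |A_0| + |A_1'| = |H_1| + |A_1'|$.

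For the first step, take $a \in A_0$: Lemma \ref{LEMA0} directly gives $M(a) \in f(a)$, hence $(a, M(a)) \in M_f$. For $a \in A_1'$, pick any house $h \in f(a) \cap H_0$, which exists by the definition \eqref{EQA1prime} of $A_1'$. Suppose, toward a contradiction, that $M(a) \notin f(a)$. Since $f(a)$ is the set of houses most preferred by $a$, any $h \in f(a)$ is strictly preferred by $a$ to every house outside $f(a)$, so $h \succ_a M(a)$, and the definition \eqref{EQweightHAT} of $w_M$ gives $w_M(a,h) = 2$. Combining the dual constraint \eqref{EQdualconst1} with $y^*(h) = 0$ (since $h \in H_0$) forces $y^*(a) \ge 2$, contradicting Lemma \ref{LEMy01}. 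Hence $M(a) \in f(a)$ and $(a, M(a)) \in M_f$.

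For the second step, let $a \in A_1 \setminus A_1'$. By \eqref{EQA1prime} we have $f(a) \cap H_0 = \emptyset$, so $f(a) \subseteq H_1$. If $M(a)$ were in $f(a)$, then in particular $M(a) \in H_1$, and Lemma \ref{LEMA0H1} would give $a = M(M(a)) \in A_0$, contradicting $a \in A_1$. Thus no $a \in A_1 \setminus A_1'$ is matched by $M_f$. Combining the two steps shows that the applicants matched by $M_f$ are exactly those in $A_0 \cup A_1'$, and since these sets are disjoint, $|M_f| = |A_0| + |A_1'|$. An application of $|A_0| = |H_1|$ from Lemma \ref{LEMA0H1} completes the argument.

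The main technical point is in the contradiction for $a \in A_1'$: it is essential to argue the strict inequality $h \succ_a M(a)$ (rather than $h \succsim_a M(a)$) so that $w_M(a,h) = 2$ instead of $1$; only then is the dual constraint sharp enough to force $y^*(a) \ge 2$. Everything else is bookkeeping that follows once the four earlier lemmas of this subsection are in place.
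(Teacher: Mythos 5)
Your proof is correct and follows essentially the same route as the paper: classify the applicants into $A_0$, $A_1'$, and $A_1\setminus A_1'$, show the first two classes are matched inside $f(\cdot)$ and the third is not, and then convert $|A_0|$ to $|H_1|$ via Lemma \ref{LEMA0H1}. The only cosmetic differences are that for $a\in A_1'$ you argue by contradiction (forcing $y^*(a)\ge 2$) where the paper directly sandwiches $w_M(a,h)=1$, and for $a\in A_1\setminus A_1'$ you invoke Lemma \ref{LEMA0H1} where the paper uses complementary slackness to place $M(a)$ in $H_0$; both variants are valid.
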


\begin{proof}
    Let $a\in A$. 
    Since $M$ is an $A$-perfect matching, 
    we have that $(a,M(a))$ exists. 
    Below we investigate whether $(a,M(a))\in M_f$. 

    If $a\in A_0$, it follows from Lemma \ref{LEMA0} that $M(a)\in f(a)$ and hence $(a,M(a))\in M_f$. 

    Suppose that $a\in A_1'$. 
    It follows from the definition \eqref{EQA1prime} of $A_1'$ that 
    there exists a house $h\in f(a) \cap H_0$. 
    It follows from $a\in A_1'$, $h\in H_0$ and \eqref{EQdualconst1} that $w_M(a,h) \le y^*(a)+y^*(h)=1+0=1$. 
    It also follows from $h\in f(a)$ that $w_M(a,h)\ge 1$. 
    We thus obtain $w_M(a,h)=1$ and $M(a)\in f(a)$, 
    i.e.,\ $(a,M(a))\in M_f$. 
    
    Suppose that $a\in A_1\setminus A_1'$. 
    It follows from \eqref{EQCS1} that $y^*(M(a))=1-y^*(a)=0$, 
    and hence $M(a)\in H_0$. 
    It then follows from the definition \eqref{EQA1prime} of $A_1'$ that 
    $f(a)\cap H_0=\emptyset$, 
    and thus $M(a)\not\in f(a)$, 
    i.e.,\ 
    $(a,M(a))\not \in M_f$. 

    Therefore, 
    we conclude that $|M_f|=|A_0|+|A_1'|=|A_1'| +|H_1|$, 
    where the latter equality follows from Lemma \ref{LEMA0H1}. 
\end{proof}

On the basis of Theorem \ref{THMkonig}, 
we obtain 
from Lemmas \ref{LEMcover} and \ref{LEMmf}
that $M_f$ is a maximum matching in $G_f$. 
Finally, 
we derive Property \ref{ENUgraphHAT2} in Theorem \ref{THMgraphHAT} 
for the applicants in $A_1$. 

\begin{lemma}
    \label{LEMA1}
    For each $a\in A_1$, 
    it holds that $M(a)\in f(a) \cup s(a)$. 
\end{lemma}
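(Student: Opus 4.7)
Let $a \in A_1$. If $M(a) \in f(a)$, there is nothing to prove, so assume $M(a) \notin f(a)$; the goal is to show $M(a) \in s(a)$. The proof of Lemma \ref{LEMmf} establishes that every $a' \in A_0 \cup A_1'$ satisfies $M(a') \in f(a')$, so our $a$ must lie in $A_1 \setminus A_1'$. Since $M_f \subseteq M$, any $M_f$-edge at $a$ would force $M(a) \in f(a)$; hence $a$ is $M_f$-unmatched and therefore even in $G_f$. The mirror argument for $M(a)$ shows it too is $M_f$-unmatched, since $(a^\ast, M(a)) \in M_f$ would force $a^\ast = a$ and $M(a) \in f(a)$. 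Thus $M(a) \in U_H$ is even in $G_f$, making it a candidate for $s(a)$.

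It remains to show that no $h' \in \Gamma(a)$ with $h' \succ_a M(a)$ is even in $G_f$. Fix such $h'$. From \eqref{EQweightHAT}, $w_M(a,h') = 2$; combined with $y^*(a) = 1$, Lemma \ref{LEMy01}, and \eqref{EQdualconst1}, this gives $y^*(h') = 1$, so $h' \in H_1$. Complementary slackness \eqref{EQCS2} produces an $M$-mate $a'' = M(h')$, and \eqref{EQCS1} gives $y^*(a'') = 0$, so $a'' \in A_0$. Lemma \ref{LEMA0} then forces $h' \in f(a'')$, so $(a'',h') \in M \cap E_f = M_f$, and $h'$ is $M_f$-matched.

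Assume for contradiction that $h'$ is even. Then by Lemma \ref{LEMDM}, $a'' = M_f(h')$ is odd in $G_f$, so there is an odd $M_f$-alternating path $h_0, a_0, h_1, \ldots, h_k, a_k = a''$ with $h_0$ unmatched in $M_f$. Observe the auxiliary inclusion $U_H \subseteq H_0$: any $h \in H_1$ is $M$-matched by \eqref{EQCS2}, its $M$-mate lies in $A_0$ by \eqref{EQCS1}, and Lemma \ref{LEMA0} then puts the edge in $E_f$ and hence in $M_f$. Consequently $y^*(h_0) = 0$. Walking the path alternately, each non-matching edge $(h_i, a_i) \in E_f$ together with the already-known $y^*(h_i) = 0$ gives $y^*(a_i) = 1$ via \eqref{EQdualconst1}, and each matching edge $(a_i, h_{i+1}) \in M_f \subseteq M$ in turn gives $y^*(h_{i+1}) = 0$ by \eqref{EQCS1}. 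In particular $y^*(h_k) = 0$. But applying \eqref{EQdualconst1} to the terminal edge $(h_k, a'') \in E_f$ with $y^*(a'') = 0$ and $w_M \ge 1$ demands $y^*(h_k) \ge 1$, a contradiction.

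The main obstacle is the propagation argument along the alternating path, whose correct invariant is ``$y^*(h_i) = 0$ and $y^*(a_i) = 1$ throughout the interior'', terminating in a dual-constraint violation at $(h_k, a'')$ precisely because $a'' \in A_0$. The subtlest prerequisite is the inclusion $U_H \subseteq H_0$, which is not isolated as a lemma in the excerpt but requires combining \eqref{EQCS1}, \eqref{EQCS2}, and Lemma \ref{LEMA0} to show that every vertex of $H_1$ is $M_f$-matched; once this and the parity walk are in hand, the remaining verification is routine.
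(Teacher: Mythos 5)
Your proof is correct and follows essentially the same route as the paper's: reduce to $a\in A_1\setminus A_1'$ via the proof of Lemma \ref{LEMmf}, observe that $M(a)$ is even, and rule out an even house $h'\succ_a M(a)$ by propagating the values of $y^*$ along an $M_f$-alternating path starting from an $M_f$-unmatched house (which lies in $H_0$ because every house in $H_1$ is $M_f$-matched), until a contradiction with the dual constraint \eqref{EQdualconst1}. The only cosmetic difference is that the paper takes the even-length path ending at $h'$ itself and contradicts $y^*(h')=1$, whereas you take the odd-length path ending at the odd applicant $M_f(h')\in A_0$ and contradict \eqref{EQdualconst1} on its last edge, which quietly uses the standard (but not literally stated) converse of Property \ref{ENUDM2} of Lemma \ref{LEMDM}, namely that a matched even vertex has an odd mate.
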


\begin{proof}
    In the proof of Lemma \ref{LEMmf}, 
    we have shown that $M(a)\in f(a)$ if $a\in A_1'$, 
    and 
    $M(a)\not\in f(a)$ if $a\in A_1 \setminus A_1'$. 
    We prove the lemma by showing that $M(a)\in s(a)$ for each $a\in A_1 \setminus A_1'$. 

    Let $a\in A_1 \setminus A_1'$. 
    We have that $(a,M(a)) \not \in M_f$, 
    and hence $M(a)$ is even. 
    Below we show that each house $h\in \Gamma(a)$ such that $h\succ_a M(a)$ is not even, 
    which completes the proof. 

    Let $h\in \Gamma(a)$ and $h\succ_a M(a)$. 
    It follows that $w_M(a,h)=2$, 
    implying that $y^*(h)\ge w_M(a,h) - y^*(a)=1$, and hence $h\in H_1$. 
    
    Suppose to the contrary that $h$ is even. 
    Since $M_f$ is a maximum matching in $G_f$, 
    there exists an alternating path $P$ of even length connecting $h$ and a house $h'$ unmatched by $M_f$. 
    It follows from Lemmas \ref{LEMA0H1} and \ref{LEMA0} that each house in $H_1$ is matched by $M_f$, 
    and hence $h'\in H_0$. 
    Let $P=(h_0,a_0,h_1,a_1,\ldots, h_{k},a_k,h_{k+1})$, 
    where 
    $h_0=h'$, $h_{k+1}=h$, 
    $(h_i,a_i)\in E_f \setminus M_f$ 
    and 
    $(a_i,h_{i+1})\in M_f$ for each $i=0,1,\ldots, k$. 
    We have that $w_M(e)=1$ for each edge $e$ in $P$, 
    and hence $y^*(a_0) \ge w_M(h_0,a_0) - y^*(h_0)=1-0=1$, 
    implying that $y^*(a_0)=1$. 
    It then follows from \eqref{EQCS1} that 
    $y^*(h_1)=w_M(a_0,h_1)-y^*(a_0)=1-1=0$. 
    By repeating this argument, 
    we obtain $y^*(h)=0$, 
    contradicting that $h\in H_1$.    
    We thus conclude that $h$ is not even. 
\end{proof}

\section{Equivalence of the characterizations for the SMI problem}
\label{SECSMI}
In this section, 
we prove that the two characterizations in Theorems \ref{THMgraphSMI} and \ref{THMoptSMI} are equivalent 
without using the fact that $M$ is popular.

\subsection{Deriving the optimization-based characterization for the SMI problem}

Let $(G=(U,V;E),(\Gamma(u)\cup \{\emptyset\},\succ_u)_{u\in U\cup V})$ be an instance of the SMI problem. 
Suppose that a matching $M\subseteq E$ in $G$ satisfies the graph-structural characterization, 
i.e.,\ Properties \ref{ENUgraphSMI1}--\ref{ENUgraphSMI3} in Theorem \ref{THMgraphSMI}. 
We prove that $M$ is a maximum-weight matching in $(G,w_M)$, 
where the weight vector $w_M$ is defined by \eqref{EQweightSMI1}--\eqref{EQweightSMI3}.

In this section, 
we refer to a linear program obtained from \eqref{EQLPmm_obj}--\eqref{EQLPmm_const3} by  replacing $w$ with $w_M$
simply as \eqref{EQLPmm_obj}--\eqref{EQLPmm_const3}. 
The same applies to the dual problem \eqref{EQdualmm_obj}--\eqref{EQdualmm_const3}. 

It follows from the definition of $w_M$ that 
$w_M(e)=2$ for each edge $e\in M$ 
and hence the characteristic vector $\chi_M$ of $M$ has objective value $2|M|$ in 
the linear program \eqref{EQLPmm_obj}--\eqref{EQLPmm_const3}. 
Below we construct a feasible solution $y^*\in \RR^{U\cup V}$ of the dual problem \eqref{EQdualmm_obj}--\eqref{EQdualmm_const3} 
with objective value $2|M|$. 

Let $\mathcal{P}$ be a family of all alternating paths with respect to $M$ in $G_M^+$ including an edge $e$ 
with label $(\alpha_M(e),\beta_M(e))=(+,+)$, 
and let $P\in \mathcal{P}$. 
It follows from Property \ref{ENUgraphSMI3} of Theorem \ref{THMgraphSMI} that $P$ includes a unique edge $e$ with $(\alpha_M(e),\beta_M(e))=(+,+)$, 
which is denoted by $e=(u_P,v_P)$, 
where $u_P\in U$ and $v_P\in V$. 
It then follows from Property \ref{ENUgraphSMI2} of Theorem \ref{THMgraphSMI} that both $u_P$ and $v_P$ are matched by $M$, 
and hence $\phi_M(e)=\psi_M(e)=2$, 
i.e.,\ $w_M(u_P,v_P)=4$.

Every vertex on $P$ can be classified according to its distance to $u_P$ and $v_P$ along $P$. 
Specifically, 
a vertex $u \in U\cup V$ in $P$ is said to be \emph{closer to $u_P$ than $v_P$} if a subpath of $P$ connecting $u$ and $u_P$ does not 
include $v_P$, 
and otherwise $u$ is \emph{closer to $v_P$ than $u_P$}. 

On the basis of this concept, 
define $\Ueven, \Uodd \subseteq U$ and $\Veven, \Vodd \subseteq V$ by 
\begin{align*}
    &{}\Ueven = \{u\in U \colon \mbox{$u$ is closer to $u_P$ for some $ P\in \mathcal{P}$}\}, \\
    &{}\Uodd = \{u\in U \colon \mbox{$u$ is closer to $v_P$ for some $ P\in \mathcal{P}$}\}, \\
    &{}\Veven = \{v\in V \colon \mbox{$v$ is closer to $v_P$ for some $ P\in \mathcal{P}$}\}, \\
    &{}\Vodd = \{v\in V \colon \mbox{$v$ is closer to $u_P$ for some $ P\in \mathcal{P}$}\}.
\end{align*}
It holds that these four sets give a partition of the set of vertices included in the alternating paths in $\mathcal{P}$. 

\begin{lemma}
    \label{LEMdisjoint}
    It holds that $\Uodd \cap \Ueven = \emptyset$ and $\Vodd \cap \Veven = \emptyset$.
\end{lemma}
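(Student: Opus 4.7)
The plan is to derive a contradiction by constructing, from two witnessing paths $P_1, P_2 \in \mathcal{P}$, an alternating walk in $G_M^+$ carrying two $(+,+)$ edges, which after a standard cleanup violates either Property~\ref{ENUgraphSMI1} or Property~\ref{ENUgraphSMI3} of Theorem~\ref{THMgraphSMI}. I focus on $\Ueven \cap \Uodd = \emptyset$; the identity $\Veven \cap \Vodd = \emptyset$ then follows from a completely symmetric argument with the roles of $U$ and $V$ interchanged.

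Suppose for contradiction that $u \in \Ueven \cap \Uodd$, witnessed by $P_1, P_2 \in \mathcal{P}$, so that $u$ is closer to $u_{P_1}$ than $v_{P_1}$ on $P_1$ and closer to $v_{P_2}$ than $u_{P_2}$ on $P_2$. I first observe that $P_1 \neq P_2$, since deleting the $(+,+)$ edge $(u_P,v_P)$ from a single $P \in \mathcal{P}$ splits $P$ into two subpaths (one containing $u_P$, one containing $v_P$), so every vertex of $P$ is closer to exactly one of $u_P, v_P$. Let $A$ denote the subpath of $P_1$ between $u_{P_1}$ and $u$ and $B$ the subpath of $P_2$ between $u$ and $v_{P_2}$. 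By the bipartite parity of vertex colours along each path, $|A|$ is even and $|B|$ is odd.

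Next, I would concatenate these pieces into the walk
\[
W \;=\; (v_{P_1}, u_{P_1}) \cdot A \cdot B \cdot (v_{P_2}, u_{P_2}),
\]
running from $v_{P_1}$ to $u_{P_2}$ through $u$, using only edges of $G_M^+$. The key alternation check is at the junction $u$: the edge of $A$ at $u_{P_1}$ is $(u_{P_1}, M(u_{P_1})) \in M$, so by alternation along $A$ and the evenness of $|A|$, the edge of $A$ at $u$ is not in $M$; similarly the edge of $B$ at $v_{P_2}$ is $(v_{P_2}, M(v_{P_2})) \in M$, so by the oddness of $|B|$ the edge of $B$ at $u$ is in $M$. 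Together with the fact that both $(+,+)$ edges lie outside $M$, alternation holds at $u_{P_1}$, $u$, and $v_{P_2}$, and $W$ is an alternating walk in $G_M^+$ containing both $(+,+)$ edges $(u_{P_1}, v_{P_1})$ and $(u_{P_2}, v_{P_2})$.

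The main obstacle is that $W$ need not be a simple path. To handle this, I would iteratively shorten $W$ by extracting simple alternating cycles. At each step, I take the first repeated vertex $x_j = x_i$ along $W$ (with $x_0, \ldots, x_{j-1}$ distinct); the closed subwalk $x_i, x_{i+1}, \ldots, x_j$ is then a simple alternating cycle in $G_M^+$ of even length at least $4$, the length-$4$ lower bound coming from the fact that no edge can appear in two consecutive positions of an alternating walk. If this cycle contains one of the $(+,+)$ edges, Property~\ref{ENUgraphSMI1} is violated and the proof is complete. Otherwise, deleting the cycle yields a strictly shorter alternating walk from $v_{P_1}$ to $u_{P_2}$ that still carries both $(+,+)$ edges, and I iterate on the reduced walk. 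Since the walk length strictly decreases, the process terminates either by producing a $(+,+)$-containing alternating cycle (contradicting~\ref{ENUgraphSMI1}) or by producing a simple alternating path from $v_{P_1}$ to $u_{P_2}$ still containing two $(+,+)$ edges (contradicting~\ref{ENUgraphSMI3}). In either case we reach the required contradiction.
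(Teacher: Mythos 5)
Your proposal is correct and follows essentially the same route as the paper: concatenate the subpath of $P_1$ from $v_{P_1}$ through $u_{P_1}$ to $u$ with the subpath of $P_2$ from $u$ through $v_{P_2}$ to $u_{P_2}$, verify alternation at the junction via the parity of the two pieces, and contradict Property \ref{ENUgraphSMI3}. Your additional cycle-extraction step, which handles the possibility that the concatenation is not a simple path (falling back on Property \ref{ENUgraphSMI1} when an extracted cycle carries a $(+,+)$ edge), is a point the paper's own proof glosses over, so this extra care is welcome rather than redundant.
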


\begin{proof}
    Below we prove that $\Uodd \cap \Ueven = \emptyset$. 
    The same argument proves $\Vodd \cap \Veven = \emptyset$. 

    Suppose to the contrary that $u\in \Uodd \cap \Ueven$ for some vertex $u\in U$. 
    It follows that $u$ is closer to $u_P$ for some $P\in \mathcal{P}$ and 
    closer to $v_{P'}$ for some $P'\in \mathcal{P}$. 
    Let $P(u,v_P)$ denote a subpath of $P$ connecting $u$ and $v_P$, 
    and 
    $P'(u,u_{P'})$
    that of $P'$ connecting $u$ and $u_{P'}$. 
    It follows from $(u_P,v_P),(u_{P'},v_{P'})\not\in M$ that 
    the edge in $P(u,v_P)$ incident to $u$ does not belong to $M$, 
    but that in $P'(u,u_{P'})$ does. 
    Thus, 
    concatenating $P(u,v_P)$ and $P'(u,u_{P'})$ obtains a new alternating  
    $\tilde{P}$ connecting $v_P$ and $u_{P'}$. 
    It follows that $\tilde{P}$ is an alternating path including two edges $(u_P, v_P)$ and $(u_{P'}, v_{P'})$ 
    with label $(+,+)$, 
    contradicting Property \ref{ENUgraphSMI3} in Theorem \ref{THMgraphSMI}. 
    We thus conclude that $\Uodd \cap \Ueven = \emptyset$.     
\end{proof}

On the basis of Lemma \ref{LEMdisjoint}, 
define $y^*\in \{0,1,2\}^{U\cup V}$ as follows. 
\begin{itemize}
    \item If $u \in U\cup V$ is included in  some alternating path $P\in \mathcal{P}$, then 
    \begin{align}
    \label{EQySMI1}
    y^*(u) = 
    \begin{cases}
        2 & (u\in \Ueven\cup \Veven), \\
        0 & (u\in \Uodd \cup \Vodd). 
    \end{cases}
    \end{align}

    \item 
    If $u \in U\cup V$ is not included by any alternating path $P\in \mathcal{P}$, then 
    \begin{align}
    \label{EQySMI2}
    y^*(u) = 
    \begin{cases}
        1 & (\mbox{$u$ is matched by $M$}), \\
        0 & (\mbox{$u$ is unmatched by $M$}). 
    \end{cases}
    \end{align}
\end{itemize}

\begin{lemma}
    The vector $y^*\in \{0,1,2\}^{U\cup V}$ 
    defined by \eqref{EQySMI1} and \eqref{EQySMI2} is 
    a feasible solution of the dual problem \eqref{EQdualmm_obj}--\eqref{EQdualmm_const3} 
    with objective value $2|M|$. 
\end{lemma}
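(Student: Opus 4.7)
The plan is to verify the dual feasibility $y^*(u)+y^*(v) \geq w_M(u,v)$ on every edge of $E$ and then to compute $\sum y^* = 2|M|$; non-negativity of $y^*$ is immediate from \eqref{EQySMI1}--\eqref{EQySMI2}. For feasibility I case-split on $w_M(u,v)\in\{0,1,2,3,4\}$. The case $w_M=0$ (label $(-,-)$) is trivial. For $w_M=4$, the label must be $(+,+)$ with both endpoints matched, so the single edge $(u,v)$ is itself an alternating path in $\mathcal{P}$ and hence $u\in \Ueven$, $v\in \Veven$, giving $y^*(u)+y^*(v)=4$. For $w_M=3$, the label is $(+,+)$ but one endpoint is unmatched, and the single-edge path $(u,v)\in \mathcal{P}$ immediately contradicts Property \ref{ENUgraphSMI2}; hence $w_M=3$ does not arise. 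For $w_M=1$ the label is $(+,-)$ or $(-,+)$ with the $+$-side unmatched; the only potentially violating configuration is $y^*(u)=y^*(v)=0$, which forces the matched endpoint to lie in $\Uodd\cup \Vodd$. Taking an appropriate $P_1\in \mathcal{P}$ through that matched endpoint and prepending $(u,v)$ on the side along which a matching edge of $P_1$ lies yields an alternating path in $\mathcal{P}$ with the unmatched endpoint as its new endpoint, again contradicting Property \ref{ENUgraphSMI2}.

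The central case is $w_M(u,v)=2$. When $(u,v)\in M$ the equality $y^*(u)+y^*(v)=2$ is obtained in the objective-value step below. Otherwise the label is sign-opposed, say $(+,-)$, and $u,v$ are both matched with $v\succ_u M(u)$ and $M(v)\succ_v u$. If neither $u\in \Uodd$ nor $v\in \Vodd$, then $y^*(u),y^*(v)\geq 1$ and feasibility holds. Otherwise, assume $u\in \Uodd$: pick $P_1\in\mathcal{P}$ with $u=x_j$ on its $v_{P_1}$-side, and note by parity that the $P_1$-edge at $u$ toward $u_{P_1}$ lies in $M$. Truncating $P_1$ at $u$ and prepending $(u,v)$ yields the alternating path $P'_1=v,x_j,x_{j-1},\dots,x_0$, which still contains the $(+,+)$ edge. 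If $v\notin V(P_1)$, then $P'_1\in \mathcal{P}$ places $v\in \Veven$; if $v\in V(P_1)$ on the $v_{P_1}$-side, then already $v\in \Veven$ in $P_1$; in both cases $y^*(v)=2$ and the sum is $2$. If instead $v\in V(P_1)$ on the $u_{P_1}$-side, then $(u,v)$ closes a subpath of $P_1$ into an alternating cycle through the $(+,+)$ edge, violating Property \ref{ENUgraphSMI1}; hence this subcase cannot arise. A symmetric argument handles $v\in \Vodd$.

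For the objective value I prove $y^*(u)+y^*(M(u))=2$ for every matched $u$; summing over matched pairs (unmatched vertices contributing $0$) then gives $\sum_{w\in U\cup V}y^*(w)=2|M|$. If $u$ lies in no path of $\mathcal{P}$, the same extension trick forces $M(u)$ to lie in no path either, and both contribute $1$. If $u$ lies in some $P\in \mathcal{P}$, then either $(u,M(u))\in E(P)$ -- in which case $u$ and $M(u)$ fall on the same side of $P$ with one in $U$ and one in $V$, hence with $y^*$-values $2$ and $0$ -- or $u$ is an endpoint of $P$ with $(u,M(u))$ outside $P$, in which case extending $P$ by this edge yields $\tilde P\in \mathcal{P}$ placing $M(u)$ as a new endpoint on the same side as $u$, again with $y^*$-values $2$ and $0$. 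The principal obstacle throughout is the feasibility subcase $w_M=2$ with sign-opposed label: the parity bookkeeping along $P_1$ together with the branching analysis on whether and how $v$ already lies on $P_1$ is precisely what forces the simultaneous use of Lemma \ref{LEMdisjoint} and Property \ref{ENUgraphSMI1} of the graph-structural characterization, while the $w_M\in\{1,3\}$ arguments rely on Property \ref{ENUgraphSMI2}.
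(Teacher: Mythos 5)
Your overall strategy---case analysis on $w_M(u,v)\in\{0,1,2,3,4\}$, with the extension/truncation of alternating paths in $\mathcal{P}$ and appeals to Properties \ref{ENUgraphSMI1}--\ref{ENUgraphSMI3} and Lemma \ref{LEMdisjoint}---is essentially the paper's proof, and most of your subcases are handled correctly (some more cleanly than the paper: ruling out $w_M=3$ outright via Property \ref{ENUgraphSMI2} is legitimate, and your pairwise computation $y^*(u)+y^*(M(u))=2$ is an equivalent repackaging of the paper's count $|\Ueven\cup\Veven|=|M(\mathcal{P})|$).

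There is, however, one concrete gap in the central case $w_M(u,v)=2$. You assert that if $(u,v)\notin M$ then ``the label is sign-opposed'' and both endpoints are matched. This is false: the subcase $(\phi_M(e),\psi_M(e))=(1,1)$ with $e\notin M$ occurs exactly when \emph{both} $u$ and $v$ are unmatched, in which case the label is $(+,+)$, and your subsequent argument (which uses $v\succ_u M(u)$, $M(v)\succ_v u$ and matchedness of both endpoints) does not apply. The paper treats this subcase separately: the single edge $(u,v)$ is then itself an alternating path in $\mathcal{P}$, so $u\in\Ueven$ and $v\in\Veven$ and the constraint holds with $y^*(u)+y^*(v)=4$ (alternatively, the configuration is vacuous under Property \ref{ENUgraphSMI2}, in the same way you dispatch $w_M=3$). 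The fix is one line, but as written the case analysis is incomplete. A second, more minor omission: in the objective-value computation you take for granted that unmatched vertices contribute $0$; this needs the observation (which the paper makes explicitly via Property \ref{ENUgraphSMI2}) that no unmatched vertex lies on any path of $\mathcal{P}$, since otherwise \eqref{EQySMI1} could assign it the value $2$.
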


\begin{proof}
    Let $y^*\in \{0,1,2\}^{U\cup V}$ be 
    defined by \eqref{EQySMI1} and \eqref{EQySMI2}. 
    We first show the feasibility of $y^*$. 
    It is clear that $y^*$ satisfies \eqref{EQdualmm_const2} and \eqref{EQdualmm_const3}. 
    Here we show that $y^*$ satisfies \eqref{EQdualmm_const1}. 
    Let $e=(u,v)$, 
    where $u\in U$ and $v\in V$. 
    It is clear if $w_M(u,v)=0$. 

    \paragraph{Case 1: $w_M(e)=1$}
    Without loss of generality, assume that $\phi_M(e)=1$ and $\psi_M(e)=0$. 
    It follows from the definitions \eqref{EQweightSMI1} and \eqref{EQweightSMI2} of $\phi_M$ and $\psi_M$ that 
    $u$ is unmatched while $v$ is matched by $M$. 

    Suppose to the contrary that \eqref{EQdualmm_const1} is violated, 
    namely, 
    $y^*(u)=y^*(v)=0$. 
    We have that $v$ is matched, 
    and hence $v\in \Vodd$ follows from \eqref{EQySMI1} and \eqref{EQySMI2}. 
    This implies that $u\in \Ueven$ and hence $y^*(u)=2$, 
    contradicting $y^*(u)=0$. 

    \paragraph{Case 2: $w_M(e) = 2$}
    First suppose that $(\phi_M(e), \psi_M(e))=(2,0),(0,2)$. 
    Without loss of generality, 
    assume that $(\phi_M(e), \psi_M(e))=(2,0)$. 
    It follows from \eqref{EQalpha}--\eqref{EQweightSMI2} that 
    both $u$ and $v$ are matched by $M$ while $e\not\in M$, 
    $(\alpha_M(e),\beta_M(e))=(+,-)$, 
    and hence $e\in E_M^+$.

    Assume to the contrary that $y^*(u)+y^*(v)<2$. 
    If $(y^*(u),y^*(v))=(1,0)$, 
    it follows that $v\in \Vodd$ since $v$ is matched by $M$. 
    It then follows from $(u,v)\in E_M^+$ that $u\in \Ueven$ and hence $y^*(u)=2$, 
    contradicting that $y^*(u)=1$. 
    The case $(y^*(u),y^*(v))=(0,1)$ can be discussed in the same way. 
    Finally, assume $(y^*(u),y^*(v))=(0,0)$. 
    It follows that $u\in \Uodd$ and $v\in \Vodd$. 
    Denote an alternating path in $\mathcal{P}$ including $u$ by $P_u$, 
    and that including $v$ by $P_v$. 
    Note that $(u,v)$ is included in neither $P_u$ nor $P_v$. 
    If $P_u$ and $P_v$ share an edge $e'$ with label $(\alpha_M(e'),\beta_M(e'))=(+,+)$, 
    then concatenating $P_u$, $P_v$, and $e$ provides an alternating cycle including $e'$, 
    contradicting Property \ref{ENUgraphSMI1} in Theorem \ref{THMgraphSMI}. 
    Otherwise, 
    concatenating $P_u$, $P_v$, and $e$ provides an alternating path 
    including the two edges with $(\alpha_M(\cdot), \beta_M(\cdot))=(+,+)$ in $P_u$ and $P_v$, 
    contradicting Property \ref{ENUgraphSMI3} in Theorem \ref{THMgraphSMI}.

    Next suppose that $(\phi_M(e), \psi_M(e))=(1,1)$. 
    We have that either $(u,v)\in M$ or both $u$ and $v$ are unmatched. 
    
    Consider the former case. 
    We have that 
    $(\alpha_M(e),\beta_M(e))=(0,0)$ and hence $e\in E_M^+$. 
    If $u$ is included in some alternating path $P\in \mathcal{P}$, 
    then so is $v$. 
    It then follows that $(y^*(u),y^*(v))$ is $(2,0)$ or $(0,2)$. 
    Otherwise, 
    neither $u$ nor $v$ is included in any alternating path in $\mathcal{P}$, 
    and hence $(y^*(u),y^*(v))=(1,1)$. 
    
    Consider the latter case. 
    It follows that $(\alpha_M(e),\beta_M(e))=(+,+)$. 
    We thus derive that $u\in \Ueven$ and $v\in \Veven$, 
    and hence $(y^*(u),y^*(v))=(2,2)$. 
    Therefore, 
    in all cases \eqref{EQdualmm_const1} is satisfied.

    \paragraph{Case 3: $w_M(e) = 3$}
    Without loss of generality, 
    assume that $(\phi_M(e),\psi_M(e))=(2,1)$. 
    It follows that $v\succ_u M(u)$, $v$ is unmatched, and hence $(\alpha_M(e),\beta_M(e))=(+,+)$. 
    Then we again obtain that $(y^*(u),y^*(v))=(2,2)$ and \eqref{EQdualmm_const1} is satisfied. 
    
    \paragraph{Case 4: $w_M(u,v) = 4$}
    From an argument similar to that for Case 3, 
    we obtain that $(\alpha_M(e),\beta_M(e))=(+,+)$ and $(y^*(u),y^*(v))=(2,2)$. 

    \bigskip

    We have shown that $y^*$ is a feasible solution of the problem \eqref{EQdualmm_obj}--\eqref{EQdualmm_const3}. 
    Below we show that $\sum_{u\in U} y^*(u) + \sum_{v\in V} y^*(v) =2|M|$. 

    Let $U(\mathcal{P})$ denote the set of vertices in $U$ included in some path in $\mathcal{P}$, 
    and let $V(\mathcal{P})$ denote that for $V$. 
    Similarly, 
    let $M(\mathcal{P})$ denote the set of edges in $M$ included in a path in $\mathcal{P}$. 
    It follows from Property \ref{ENUgraphSMI2} in Theorem \ref{THMgraphSMI} that each vertex in $U(\mathcal{P}) \cup V(\mathcal{P})$ is matched by $M(\mathcal{P})$. 
    In particular, 
    each edge in $M(\mathcal{P})$ connects either a vertex in $\Ueven$ and that in $\Vodd$, 
    or a vertex in $\Veven$ and that in $\Uodd$. 
    It follows that $|\Ueven \cup \Veven|=|M(\mathcal{P})|$.


    Observe that the set of vertices in$(U\cup V)\setminus (U(\mathcal{P})\cup  V(\mathcal{P}))$ matched by $M$ is 
    exactly the set of the endpoints of the edges in $M\setminus M(\mathcal{P})$, 
    and hence the number of those vertices is $2|M\setminus M(\mathcal{P})|$. 

    From the definition \eqref{EQySMI1} and \eqref{EQySMI2} of $y^*$, 
    we obtain that 
    \begin{align*}
        \sum_{u\in U} y^*(u) + \sum_{v\in V} y^*(v) 
        &{}= 2 |\Ueven \cup \Veven| + 2 |M\setminus M(\mathcal{P})|
        = 2|M|, 
    \end{align*}
    completing the proof. 
\end{proof}

Corresponding to Corollaries \ref{CORwcoverHA} and \ref{CORwcoverHAT}, 
a theorem on a minimum $w_M$-cover for the SMI problem is obtained as follows. 

\begin{corollary}
\label{CORwcoverSMI}
    Let $(G=(U,V;E),(\Gamma(u)\cup \{\emptyset\},\succ_u)_{u\in U\cup V})$ be an instance of the SMI problem. 
    In the weighted graph $(G,w_M)$ 
    in which $w_M$ is defined by \eqref{EQweightSMI1}--\eqref{EQweightSMI3}, 
    the vector $y^*$ defined by \eqref{EQySMI1} and \eqref{EQySMI2} is a minimum $w_M$-vertex cover. 
\end{corollary}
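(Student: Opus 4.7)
The plan is to derive the corollary directly from the preceding lemma by invoking linear programming duality, without any further combinatorial work specific to the SMI problem.

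First, I would verify that the characteristic vector $\chi_M$ is a feasible solution of the primal linear program \eqref{EQLPmm_obj}--\eqref{EQLPmm_const3}: this is immediate since $M$ is a matching. Its objective value is exactly $2|M|$, because the definitions \eqref{EQweightSMI1} and \eqref{EQweightSMI2} give $\phi_M(e)=\psi_M(e)=1$ for every edge $e\in M$, hence $w_M(e)=2$ on $M$, and edges outside $M$ contribute nothing to $\sum_{e\in E} w_M(e)\chi_M(e)$.

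Next, I would invoke the preceding lemma, which established that the vector $y^*$ defined by \eqref{EQySMI1} and \eqref{EQySMI2} is a feasible solution of the dual problem \eqref{EQdualmm_obj}--\eqref{EQdualmm_const3} with objective value $2|M|$. Weak LP duality then forces both $\chi_M$ and $y^*$ to be optimal for their respective programs. In particular, $y^*$ minimizes the dual objective among all feasible solutions of \eqref{EQdualmm_obj}--\eqref{EQdualmm_const3}, which, by the definition recalled in Section \ref{SECbasics}, is precisely the statement that $y^*$ is a minimum $w_M$-vertex cover.

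Since all of the technical work has already been carried out in the preceding lemma, there is essentially no obstacle remaining. The only point that requires a moment of care is the identification of an optimal dual solution with a minimum $w_M$-vertex cover, but this is exactly how $w_M$-vertex covers were defined earlier. The corollary is therefore a one-line consequence and will be stated as such in the write-up.
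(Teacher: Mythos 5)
Your proposal is correct and matches the paper's own derivation: the paper likewise observes that $w_M(e)=2$ for every $e\in M$, so $\chi_M$ is primal feasible with value $2|M|$, and then obtains the corollary from the preceding lemma via LP duality, with the identification of dual optimal solutions and minimum $w_M$-vertex covers exactly as recalled in Section \ref{SECbasics}. No gap.
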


An interpretation of Corollary \ref{CORwcoverSMI} is as follows. 
If a matching $M$ in $G$ satisfies Properties \ref{ENUgraphSMI1}--\ref{ENUgraphSMI3} in Theorem \ref{THMgraphSMI} 
(i.e.,\ if $M$ is a popular matching), 
then 
a minimum $w_M$-vertex cover $y^*\in \{0,1,2\}^{U\cup V}$ is obtained from the alternating paths with respect to $M$ 
in $G_M^+$. 
Similarly to those for the HA and HAT problems, 
these are defined 
without reference to either the linear program \eqref{EQLPmm_obj}--\eqref{EQLPmm_const3} or \eqref{EQdualmm_obj}--\eqref{EQdualmm_const3}.
Further, 
a minimum $w_M$-vertex cover $y^*$ is obtained as a $\{0,1,2\}$-vector, 
while $w_M$ is a $\{0,1,2,3,4\}$-vector. 

\begin{corollary}
    \label{CORSMI}
    The dual problem \eqref{EQdualmm_obj}--\eqref{EQdualmm_const3} in which $w$ is replaced by $w_M$ defined by \eqref{EQweightSMI1}--\eqref{EQweightSMI3} has a $\{0,1,2\}$-optimal solution. 
\end{corollary}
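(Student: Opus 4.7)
The plan is to harvest the construction from the lemma immediately preceding Corollary \ref{CORwcoverSMI}, which already produces a vector $y^* \in \{0,1,2\}^{U \cup V}$ defined by \eqref{EQySMI1} and \eqref{EQySMI2} and shows that $y^*$ is feasible for the dual problem \eqref{EQdualmm_obj}--\eqref{EQdualmm_const3} with objective value $2|M|$. The remaining task is merely to certify that this $y^*$ is actually dual-optimal; its $\{0,1,2\}$-range is automatic from the definition.

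First, I would compute the primal objective value attained by the characteristic vector $\chi_M$. For every edge $e \in M$, the definitions \eqref{EQweightSMI1} and \eqref{EQweightSMI2} force $\phi_M(e) = \psi_M(e) = 1$ (the ``$e \in M$'' branch in both cases), hence $w_M(e) = 2$. Since $M$ is a matching, $\chi_M$ is primal feasible for \eqref{EQLPmm_obj}--\eqref{EQLPmm_const3}, and its primal objective value equals $\sum_{e \in M} w_M(e) = 2|M|$.

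Next, I would invoke weak duality for linear programming: since $\chi_M$ is primal feasible and $y^*$ is dual feasible with a common objective value $2|M|$, both are optimal for their respective linear programs. In particular, $y^*$ is a dual optimal solution, and its entries lie in $\{0,1,2\}$ by \eqref{EQySMI1} and \eqref{EQySMI2}. This produces the required $\{0,1,2\}$-optimal solution and proves the corollary.

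No serious obstacle is anticipated: the hard work has already been performed in the preceding lemma, where a case analysis over $w_M(e) \in \{0,1,2,3,4\}$ verified dual feasibility of $y^*$, and the combinatorial identity $|\Ueven \cup \Veven| = |M(\mathcal{P})|$ supplied the matching objective value $2|M|$. Corollary \ref{CORSMI} is merely the observation that this pre-built dual feasible vector is in fact optimal, via weak duality against $\chi_M$, and already $\{0,1,2\}$-valued by construction. It is worth noting that this conclusion is genuinely stronger than what Lemma \ref{LEMintmm} provides, since $w_M$ ranges in $\{0,1,2,3,4\}$ and integrality alone would only yield bounded (but possibly larger) entries; the structural information from the alternating paths in $\mathcal{P}$ is what forces the tighter bound of $2$.
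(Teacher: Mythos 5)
Your proposal is correct and follows essentially the same route as the paper: the preceding lemma already supplies the dual-feasible $\{0,1,2\}$-vector $y^*$ with objective value $2|M|$, and pairing it with the primal-feasible $\chi_M$ of equal value $2|M|$ certifies optimality of both. The only cosmetic difference is that you cite weak duality where the paper invokes strong duality; weak duality indeed suffices for this certificate argument.
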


\subsection{Deriving the graph-structural characterization for the SMI problem}

Let $(G=(U,V;E),(\Gamma(u)\cup \{\emptyset\},\succ_u)_{u\in U\cup V})$ be an instance of the SMI problem, 
and let $M$ be a matching in $G$ which violates at least one of Properties \ref{ENUgraphSMI1}--\ref{ENUgraphSMI3} 
in Theorem \ref{THMgraphSMI}. 
We prove that $M$ is not a maximum-weight matching in $(G, w_M)$ with $w_M$ defined by \eqref{EQweightSMI1}--\eqref{EQweightSMI3}. 

For each edge $e=(u,v)\in M$, 
it holds that $(\phi_M(e),\psi_M(e))=(1,1)$, and hence $w_M(e)=2$. 
It follows that $w_M(M)=2|M|$. 
Below we show that there exists a matching $M'$ with $w_M(M') > 2|M|$. 
For two edge sets $F_1,F_2\subseteq E$, 
let $F_1\triangle F_2$ denote their symmetric difference, 
i.e.,\ 
$F_1\triangle F_2=(F_1\setminus F_2) \cup (F_2\setminus F_1)$. 

\paragraph{Case 1: $M$ violates Property \ref{ENUgraphSMI1}}
Let $C$ be an alternating cycle with respect to $M$ in $G_M^+$ 
including an edge $e^*$ with label $(\alpha_M(e^*),\beta_M(e^*))=(+,+)$. 
By abuse of notation, 
let $C$ denote the set of edges in $C$. 
Note that $e^*\in C\setminus M$. 

It follows that every vertex in $C$ is matched by $M$, 
and hence each edge $e \in C\setminus M$ has label $(\phi_M(e),\psi_M(e))=(2,2),(2,0),(0,2)$. 
Note that each edge $e$ with label $(\phi_M(e),\psi_M(e))=(0,0)$ has label $(\alpha_M(e),\beta_M(e)=(-,-)$ and hence is excluded in $G_M^+$. 
It follows that $w_M(e)=2,4$ for each edge $e \in C\setminus M$, 
and in particular $w_M(e^*)=4$. 
We then obtain a matching $M' = M \triangle C$ 
with $w_M(M') \ge w_M(M)+2$.

\paragraph{Case 2: $M$ violates Property \ref{ENUgraphSMI2}}
Let $P$ be an alternating path with respect to $M$ in $G_M^+$ 
including an endpoint $u_1\in U\cup V$ unmatched by $M$ and an edge $e^*$ with $(\alpha_M(e^*),\beta_M(e^*))=(+,+)$. 
Again by abuse of notation, 
let $P$ denote the set of edges in $P$. 
Note that 
$e^*\in P\setminus M$. 

We assume that the other endpoint $u_2\in U\cup V$ of $P$ is not matched by an edge in $M\setminus P$, 
because otherwise we can extend $P$ by adding that edge. 
Namely, 
the endpoint $u_2$ is unmatched by $M$ 
or matched by an edge in $M\cap P$. 
It then follows that $M'=M\triangle P$ is a matching. 
Below we show that $w_M(M')> w_M(M)$. 
Denote the edge in $P$ including the endpoint $u_1$ by $e_1$, 
and that for $u_2$ by $e_2$. 

Suppose that $u_2$ is unmatched by $M$. 
It follows that $|P|=2k+1$ for some nonnegative integer $k$, 
where $|P\cap M| = k$ and $|P\setminus M|=k+1$. 
For each edge $e\in P$, 
we can derive from \eqref{EQalpha}--\eqref{EQweightSMI3} that 
\begin{align*}
    w_M(e) 
    \begin{cases}
        =   2 & \mbox{if $e\in P\cap M$},\\
        \ge 1 & \mbox{if $e = e_1,e_2$}, \\
        \ge 2 & \mbox{if $e\in P \setminus (M\cup \{e_1,e_2\}) $}. 
    \end{cases}
\end{align*}
Recall that at least one edge $e^*\in P \setminus (M\cup \{e_1,e_2\}) $ has label 
$(\alpha_M(e^*),\beta_M(e^*))=(+,+)$, 
and hence $(\phi_M(e^*),\psi_M(e^*))=(2,2)$ and $w_M(e^*)=4$. 
It thus follows that 
\begin{align*}
    w_M(M') 
    {}&{}= w_M(M) - w_M(P\cap M) + w_M(P\setminus M) \\
    {}&{}\ge w_M(M) - 2k + (1 \times 2 + 2 \times (k-2) + 4) = w_M(M) + 2.
\end{align*}

Next suppose that $u_2$ is unmatched by $M$. 
It follows that $|P|=2k$, 
where $|P\cap M| = k$ and $|P\setminus M|=k$, 
for some positive integer $k$. 
For each edge $e\in P$, 
we can again derive from \eqref{EQalpha}--\eqref{EQweightSMI3} that 
\begin{align*}
    w_M(e) 
    \begin{cases}
        =   2 & \mbox{if $e\in P\cap M$},\\
        \ge 1 & \mbox{if $e = e_1$}, \\
        \ge 2 & \mbox{if $e\in P \setminus (M\cup \{e_1\}) $}. 
    \end{cases}
\end{align*}
Recall that at least one edge $e^*\in P \setminus (M\cup \{e_1\}) $ has label 
$(\alpha_M(e^*),\beta_M(e^*))=(+,+)$, 
and hence has label $(\phi_M(e^*),\psi_M(e^*))=(2,2)$ and $w_M(e^*)=4$. 
It thus follows that 
\begin{align*}
    w_M(M') 
    {}&{}= w_M(M) - w_M(P\cap M) + w_M(P\setminus M) \\
    {}&{}\ge w_M(M) - 2k + (1 \times 1 + 2 \times (k-2) + 4) = w_M(M) + 1.
\end{align*}

\paragraph{Case 3: $M$ violates Property \ref{ENUgraphSMI3}}
Let $P$ be an alternating path with respect to $M$ in $G_M^+$ 
including two edges $e_1,e_2$ with label $(\alpha_M(\cdot),\beta_M(\cdot))=(+,+)$. 
If at least one endpoint of $P$ is unmatched by $M$, 
then it reduces to Case 2, 
and hence we can assume that both endpoints of $P$ are matched by $M$. 
It follows that 
$|P|=2k+1$, 
where $|P\cap M| = k+1$ and $|P\setminus M|=k$, 
for some positive integer $k$. 
For each edge $e\in P$, 
we can again derive from \eqref{EQalpha}--\eqref{EQweightSMI3} that 
\begin{align*}
    w_M(e) 
    \begin{cases}
        =   2 & \mbox{if $e\in P\cap M$},\\
        = 4 & \mbox{if $e = e_1,e_2$}, \\
        \ge 2 & \mbox{if $e\in P \setminus (M \cup \{e_1,e_2\})$}. 
    \end{cases}
\end{align*}
It thus follows that 
\begin{align*}
    w_M(M') 
    {}&{}= w_M(M) - w_M(M\cap P) + w_M(M\setminus P) \\
    {}&{}\ge w_M(M) - 2(k+1) + (2 \times (k-2) + 4\times 2) = w_M(M) + 2.
\end{align*}

\bigskip 

Therefore, 
we have proved that 
a matching $M$ in $G$ satisfying properties \ref{ENUgraphSMI1}--\ref{ENUgraphSMI3} 
in Theorem \ref{THMgraphSMI} is a maximum-weight matching in $(G, w_M)$ with $w_M$ defined by \eqref{EQweightSMI1}--\eqref{EQweightSMI3}.

\section{Conclusion}
\label{SECcncl}

We have presented proofs of the equivalence of the graph-structural and optimization-based characterizations 
for the HA, HAT, and SMI problems of popular matchings. 
Our analysis suggests a potential approach for establishing graph-structural characterizations
for the problems for which only an optimization-based characterization is known. 
Representative targets include 
a graph-structural characterization for 
the SRTI problem, 
which might be obtained from the optimization-based characterization by utilizing the theory of nonbipartite matching.

\bibliographystyle{abbrv}
\bibliography{refs}

\end{document}